\newcommand{\cX}{\mathcal{X}}
\newcommand{\cY}{\mathcal{Y}}
\newcommand{\bbP}{\mathbb{P}}
\newcommand{\bbR}{\mathbb{R}}
\newcommand{\bbE}{\mathbb{E}}
\newcommand{\cN}{\mathcal{N}}
\newcommand{\diag}{\mathrm{diag}}
\newcommand{\Var}{\mathrm{Var}}
\newtheoremstyle{thry}
{1em}
{1em}
{\itshape\rmfamily}
{}
{\scshape\large}
{.}
{.5em}
{}
\theoremstyle{thry}
\newtheorem{Thm}{\indent Theorem}
\newtheorem{Lem}{\indent  Lemma}
\newtheorem{Ass}{\indent  Assumption}
\newtheorem{Cor}{\indent  Corollary}
\title{Two-Sample Test for Stochastic Block Models via the Largest Singular Value}
\author{Kang Fu, Jianwei Hu, Seydou Keita and Hang Liu \\ {\small \itshape School of Mathematics and Statistics, Central China Normal University, Wuhan}}
\date{\today}
\begin{document}
\maketitle
\begin{abstract}
The stochastic block model is widely used for detecting community structures in network data. However, the research interest of much literature focuses on the study of one sample of stochastic block models. How to detect the difference of the community structures is a less studied issue for stochastic block models. In this article, we propose a novel test statistic based on the largest singular value of a residual matrix obtained by subtracting the geometric mean of two estimated block mean effects from the sum of two observed adjacency matrices. We prove that the null distribution of the proposed test statistic converges in distribution to the Tracy-Widom distribution with index 1, and we show the difference between the two samples for stochastic block models can be tested via the proposed method. Further, we show that the proposed test has asymptotic power guarantee against alternative models. Both simulation studies and real-world data examples indicate that the proposed method works well.

\noindent
\textbf{Keywords:} Adjacency matrix; Network data; Singular value; Stochastic block model; Two-sample test
\end{abstract}

\section{Introduction}\label{Introduction}

Network data analysis has become a popular research topic in many fields, including gene classification, social relationship investigation, and financial risk management. The study on network data with community structures has received increasing interest \cite{Newman:2004, Newman:2006, Steinhaeuser:2010}. In network data analysis, the stochastic block model proposed by \cite{Holland:1983} is a popular tool to fit network data with community structure, see, e.g., \cite{Snijders:1997, Nowicki:2001, Bickel:2009, Rohe:2011, Choi:2012, Jin:2015, Zhang:2016}. For a stochastic block model with $n$ nodes and $K$ blocks, all nodes can be clustered to $K$ disjoint block. Specifically, for a node $i$ belonging to block $u$, there is a mapping $g$ such that $g(i)=u$. In the stochastic block model, the observed network can be represented by an $n\times n$ adjacency matrix $A$, where the entries of $A$ represent whether an edge exists between two nodes. Given the block label $g$, the model assumes that each entry $A_{ij}$ of $A$ can be seen as a Bernoulli random variable with probability $P_{ij}=B_{g(i)g(j)}$, where $P$ and $B$ are $n\times n$ edge-probability matrix and $K\times K$ probability matrix, respectively. It is easy to see that given the pair $(g,B)$ up to community label permutations of nodes, a stochastic block model can be completely and uniquely determined. In this paper, we mainly focus on the undirect and unweight graph, and no self-loop exists, i.e., $A$ be a symmetric binary matrix and $A_{ii}=0$ for any $i$.

In the research on the stochastic block model, there are two fundamental problems: model selection and community detection. The model selection aims to estimate the number of blocks (or the number of communities) using the sample adjacency matrix $A$. To exactly estimate the number of communities, the majority of approaches have been proposed, see, for example, the recursive approach \cite{Zhao:2011}, the sequential testing methods \cite{Lei:2016,Hu:2021} and the likelihood-based methods \cite{Saldna:2017,Wang:2017,Hu:2020}. In addition, the purpose of community detection is to recover the community structure, that is, cluster all nodes into different communities so that the connections between nodes within the community are relatively dense and the nodes between the community are relatively sparse. For this issue, many efficient methods have also been proposed, including the modularity \citep{Newman:2006}, the profile-likelihood method \citep{Bickel:2009}, the variational method \citep{Daudin:2008}, the spectral clustering \citep{Rohe:2011,Jin:2015}, the pseudo-likelihood method \citep{Amini:2013}, and the profile-pseudo likelihood method \citep{Wang:2021}. Asymptotic properties of the estimation of community membership label have been established \citep{Rohe:2011, Zhao:2012, Choi:2012, Lei:2015, Sarkar:2015, Zhang:2016, Wang:2021}. To improve the flexibility of stochastic block models, Karrer and Newman \cite{Karrer:2011} proposed the degree-corrected stochastic block model, which introduces the additional degree-corrected parameters to fit the network with degree heterogeneity. Further, Airoldi et al. \cite{Airoldi:2008} proposed the mixed membership stochastic block model, in which a single node may belong to several different communities.

In the past decades, many works pay much attention to one sample in the study of stochastic block models. Usually, in social networks, an individual in one party may tend to develop a closer relationship with another party, see, e.g., \cite{Barnett:2016}. This situation may cause a problem, that is, the community structure of the network may change over time or other conditions. This also prompts us to consider whether the community structure of the two networks is the same. A direct approach is to judge whether two stochastic block models are identical by the hypothesis testing method. 

For the two-sample testing issue, Tang et al. \cite{Tang:2017} considered the two-sample test of the random dot product graph, and used the adjacency spectral embedding method to construct the test statistic. This testing procedure can judge whether two independent network samples are from an identical model. Further, Ghoshdastidar and von Luxburg \cite{Ghoshdastidar:2018} considered the two-sample testing issue of the general network model. They used the largest singular value of scaled and centralized matrix of the difference between two adjacency matrices to construct the test statistic and proved the asymptotic null distribution is a Tracy-Widom distribution. Similarly, Ghoshdastidar et al. \cite{Ghoshdastidar:2020} used the spectral norm and the Frobenius norm to construct two test statistics. However, this procedure needs to choose an appropriate threshold, but the threshold is hard to choose in practice. Recently, Chen et al. \cite{Chen:2021c} proposed a new testing statistic to test two general network samples, which simplified the statistic in \cite{Ghoshdastidar:2018}. Meanwhile, Chen et al. \cite{Chen:2021s} used the trace of a new construction matrix to get the test statistic and prove the null distribution convergence in distribution to the standard normal distribution. However, these methods require multiple samples, which implied that the procedure may not work well when we have less sample information (such as, we only have two network samples). In addition, these methods consider the testing problem $H_0:P_1=P_2$, where $P_1$ and $P_2$ are the edge-probability matrices of two network models, and require that the difference of edge-probability matrices is great. But in the stochastic block model, changes in the community structure may lead to small changes in the edge-probability matrix, which implies that the methods mentioned above may not work well. Hence, we need to consider an appropriate test statistic such that it can test the tiny difference when the network size is large. For the two-sample test of the stochastic block model, based on the locally smoothed adjacency matrix, Wu et al. \cite{Wu:2021} proposed a new statistic to test two stochastic block models. This method constructs the locally smoothed adjacency matrix by separating a community into serval non-overlapping neighboring sub-communities and averages the entries of the adjacency matrix in non-overlapping local neighborhoods within communities. However, this procedure can only be used when the number of communities is small and the process of construction is complex. Therefore, in this paper, we hope to propose a new test statistic that can be used in the case of $K$ diverging with $n$.

Specifically, let $\cX=(X_{ij})_{n\times n}$ and $\cY=(Y_{ij})_{n\times n}$ be two adjacency matrices with stochastic block structures $(g_x, B_x)$ and $(g_y, B_y)$, respectively. In this paper, we assume that the node labels (not community label $g$) of the two networks are predetermined and kept the same, that is, node $i$ in the network $\cX$ also appears in the network $\cY$. Hence we could regard $\cX$ and $\cY$ as repeated network measurements of a number of individuals. Now the test problem can be formulated as
\begin{equation}\label{eq:problem}
	H_0: (g_x, B_x) = (g_y, B_y)\quad \mathrm{v.s.}\quad H_1: (g_x, B_x) \neq (g_y, B_y).
\end{equation}
Intuitively, under the alternative hypothesis, there are four mutually exclusive cases: (i) $K_x \neq K_y$, (ii) $K_x = K_y, g_x = g_y$, but $B_x \neq B_y$, (iii) $K_x = K_y, B_x = B_y$, but $g_x \neq g_y$, and (iv) $K_x = K_y, g_x\neq g_y, B_x\neq B_y$. 

In this paper, we propose a two-sample test statistic to detect the difference between two stochastic block models. Under two observed adjacency matrices, the largest singular value of the residual matrix obtained by removing the geometric mean of two estimated block mean effects from the sum of two observed adjacency matrices is used to construct the test statistic. Intuitively, if $\cX$ and $\cY$ are generated from a common stochastic block model, and the entries $X_{ij}$ and $Y_{ij}$ of $\cX + \cY$, that is, the sum of two adjacency matrices, should be independent Binomial random variables. From this view, we consider combining two adjacency matrix samples to obtain a new combined adjacency matrix. Based on the largest singular value of the residual matrix of the combined adjacency matrix, we obtain a new test statistic. We show that the asymptotic null distribution of the test statistic is the Tracy-Widom distribution with index 1 when $K_x = K_y = o(n^{1/6})$.

Our testing procedure proposed in this paper goes as follows. First, we estimate the numbers of communities $K_x$ and $K_y$ using the existing estimation methods. Then, based on the estimators $\widehat{K}_x$ and $\widehat{K}_y$, we can get the consistent estimators $\widehat{g}_x$ and $\widehat{g}_y$. Second, we estimate entries of $B_x$ and $B_y$ by the sample proportions based on $(\cX, \widehat{g}_x)$ and $(\cY, \widehat{g}_y)$. Third, we combine the two samples $\cX + \cY$, and use $\sqrt{\widehat{P}^x_{ij} \cdot \widehat{P}^y_{ij}}$ to center and rescale $X_{ij} + Y_{ij}$, and obtain a residual matrix. Finally, we obtain the test statistic by computing the largest singular value of the residual matrix. The detailed mathematical expression and the theoretical results of the test statistic will be given in Section \ref{method}. The basic view of this method is that the entries of the sum of two adjacency matrices are Binomial random variables when the null hypothesis is true. Hence, the residual matrix will approximate a generalized Wigner matrix: a symmetric random matrix with independent mean zero upper diagonal entries. According to the random matrix theory, the largest eigenvalue follows a Tracy-Widom distribution. On the other hand, under the alternative hypothesis, the entries of the sum of two adjacency matrices are not Binomial random variables, and the entries of the residual matrix will not approximate a generalized Wigner matrix. This shows that the test statistic can successfully separate $H_0$ and $H_1$ in \eqref{eq:problem}.

Now, we give some notations. For a square matrix $A$, $\diag(A)$ denotes the diagonal matrix induced by $A$. For any $n \times n$ symmetric matrix $A$, $\lambda_j(A)$ denotes its $j$th largest eigenvalue value, ordered as $\lambda_1(A) \geq \lambda_2(A) \geq \cdots \geq \lambda_n(A)$, and $\sigma_1(A)$ is the largest singular value. For a matrix $A$, the notation $\|A\|$ denotes the spectral norm.

The remainder of the article is organized as follows. In Section \ref{method}, we introduce the new test statistic, and state its asymptotic null distribution and asymptotic power. Simulation studies and real-world data examples are given in Sections \ref{Simulation} and \ref{Real}, respectively. Some discussions are given in Section \ref{Discussion}. All technical proofs are postponed to the Appendix.

\section{A New Two-Sample Test for Stochastic Block Models}\label{method}

Consider a stochastic block model on $n$ nodes. Given a number of communities $K$ and a community label $g$, the maximum likelihood estimator of $B$ is given by
\begin{equation}\label{eq:estB}
	\widehat{B}_{uv} = \begin{dcases}
 \dfrac{\sum_{i\in\cN_{u}, j\in\cN_{v}}X_{ij}}{n_u n_v}, \quad & u \neq v,\\
 \dfrac{\sum_{i,j\in\cN_{u},i\neq j}X_{ij}}{n_u (n_v-1)}, \quad & u = v,	
 \end{dcases}
\end{equation}
where $\cN_{u} = \left\{i: g(i) = u\right\}$ for $1 \leq u \leq K$, $i$ stands for the node label and $n_u = \left|\cN_{u}\right|$. Throughout, we denote $\cN_{u}^x = \left\{i: g_x(i) = u\right\}$ and $n_u^x = \left|\cN_{u}^x\right|$, where $x$ can be replaced by $y$.

For a single observed adjacency matrix $\cX$, Lei \cite{Lei:2016} considered using a test statistic, based on the largest singular value, to test the following hypothesis test:
\[
H_0:K = K_0\quad \mathrm{v.s.}\quad H_1: K>K_0,
\]
where, we use $K$ to denote the true number of communities, and use $K_0$ to denote a hypothetical number of communities. Let the centered and rescaled adjacency matrix $\tilde{X}$ be
\[
\tilde{X} = \begin{dcases}
	\dfrac{X_{ij} - \widehat{P}_{ij}}{\sqrt{(n-1)\widehat{P}_{ij}(1-\widehat{P}_{ij})}}, & i\neq j, \\
	0, & i=j,
\end{dcases}
\]
where $\widehat{P}_{ij} = \widehat{B}_{g_0(i)g_0(j)}$, as defined in \eqref{eq:estB}. Under the null hypothesis $H_0 : K = K_0$, if $K = o(n^{1/6})$, Lei \cite{Lei:2016} showed that
\[
n^{2/3}[\lambda_1(\tilde{X}) - 2] \stackrel{d}{\longrightarrow} TW_1\ \text{and}\ n^{2/3}[-\lambda_n(\tilde{X}) - 2] \stackrel{d}{\longrightarrow} TW_1,
\]
where $TW_1$ denotes the Tracy-Widom distribution with index 1 and ``$\stackrel{d}{\longrightarrow}$" denotes convergence in distribution.. Further, for test $H_0 : K=K_0$, Lei \cite{Lei:2016} proposed to obtain $\widehat{g}$ using consistent clustering methods (under $K=K_0$) and considered the following test statistic:
\[
T_{n,K_0}=\max\{n^{2/3}[\lambda_1(\tilde{X}) - 2], n^{2/3}[-\lambda_n(\tilde{X}) - 2]\},
\]
where $g_0$ in $\tilde{X}$ has been replaced by $\widehat{g}$. Given the test level $\alpha$, the corresponding rejection rule is then
\[
\mathrm{Reject:}\ H_0: K = K_0\ \mathrm{if}\ T_n\geq t_{(\alpha/2)},
\]
where $t_{\alpha}$ is the $\alpha$th quantile of the $TW_1$ distribution for $\alpha\in(0,1)$. 

In this paper, we aim to develop a new test statistic that allows this statistic to test two samples under the case of $K$ diverging with $n$. For two samples $\cX$ and $\cY$ from stochastic block models $(g_x, B_x)$ and $(g_y, B_y)$, there are many methods for estimating $\widehat{K}_x$ and $\widehat{K}_y$. In particular, in this paper, we may use the corrected Bayesian information criterion in \cite{Hu:2020} to estimate $K_x$ and $K_y$. Given $\widehat{K}_x$ and $\widehat{K}_y$, the community labels $g_x$ and $g_y$ can be estimated, denote by $\widehat{g}_x$ and $\widehat{g}_y$, by some existing consistent community detection procedures. Then, we can get the $\widehat{B}^x$ and $\widehat{B}^y$ by equation \eqref{eq:estB}.


To derive a two-sample test for stochastic block models, we extend the method in \cite{Lei:2016} to two samples of stochastic block models. For binomial random variables, it is easy to know that the sum of binomial random variables also follows the binomial distribution, i.e.,
\[
\xi+\eta \sim B(n_1+n_2, p)\quad \mathrm{for}\quad \xi \sim B(n_1, p)\ \mathrm{and}\ \eta\sim B(n_2, p).
\]

In the stochastic block model, the entries $X_{ij}$ (or $Y_{ij}$) of the unweight adjacency matrix $\cX$ (or $\cY$) can be seen as Bernoulli random variables with parameter $P^x_{ij} = B^x_{g_x(i)g_x(j)}$ (or $P^y_{ij} = B^y_{g_y(i)g_y(j)}$), so that
\[
\bbE\{\cX\} = P^x - \diag(P^x)\ \mathrm{and}\ \bbE\{\cY\} = P^y - \diag(P^y).
\]
Let $\tilde{A}^*$ be
\begin{equation}\label{eq:As}
\tilde{A}^* = \begin{dcases}
	 \dfrac{X_{ij} + Y_{ij} - 2P_{ij}}{\sqrt{(n-1)\cdot2P_{ij}(1-P_{ij})}}, & i\neq j \\
	 0, & i=j,
	 \end{dcases}
\end{equation}
where $P_{ij} = \sqrt{P_{ij}^x \cdot P_{ij}^y}$. Here, the reason why we use geometric mean instead of other mean is that when the alternative hypothesis is true, the singular value of the residual matrix will become larger, which helps us get better power of the test. Notice that, under $H_0$, $X_{ij} + Y_{ij}$ follows a binomial distribution with parameters $n = 2$ and $P_{ij} = P^x_{ij} = P^y_{ij}$, i.e., $X_{ij} + Y_{ij} \sim B(2, P_{ij})$. Hence $\tilde{A}^*$ is a general Wigner matrix, satisfying $\bbE\{\tilde{A}^*_{ij}\} = 0$ for all $(i, j)$ and $\sum_j\Var\{\tilde{A}^*_{ij}\} = 1$ for all $i$. In the random matrix theory, the asymptotic distribution of the extreme eigenvalues of the general Winger matrix has been wildly studied. In particular, according to results in \cite{Erdos:2012} and \cite{Lee:2014} we have
\begin{equation}\label{eq:eigcon}
	n^{2/3}[\lambda_1(\tilde{A}^*)-2]\stackrel{d}{\longrightarrow} TW_1\ \mathrm{and}\ n^{2/3}[-\lambda_n(\tilde{A}^*)-2]\stackrel{d}{\longrightarrow} TW_1,
\end{equation}
\noindent
We formally state and prove this result as Lemma \ref{Lemma:1} in Appendix.

The matrix $\tilde{A}^*$ involves unknown model parameters and cannot be used as a test statistic directly. Hence, a natural method is replacing these parameters with their estimators. we first obtain consistent estimators $\widehat{K}_x$, $\widehat{K}_y$, $\widehat{g}_x$ and $\widehat{g}_y$ by the method described above, then estimate $\widehat{B}_x$ and $\widehat{B}_y$ using equation \eqref{eq:estB}. Based on the estimators $(\widehat{g}_x, \widehat{B}_x)$ and $(\widehat{g}_y, \widehat{B}_y)$, the empirically centered and rescaled adjacency matrix $\tilde{A}$ can be obtained:

\begin{equation}\label{eq:At}
	\tilde{A} = \begin{dcases}
 \dfrac{X_{ij} + Y_{ij} - 2\widehat{P}_{ij}}{\sqrt{(n-1)\cdot 2\widehat{P}_{ij}(1-\widehat{P}_{ij})}}, & i\neq j, \\
 0, & i = j,
 \end{dcases}
\end{equation}
where 
\[
\widehat{P}_{ij} = \sqrt{\widehat{P}_{ij}^x \cdot \widehat{P}_{ij}^y} = \sqrt{\widehat{B}^x_{\widehat{g}_{x}(i)\widehat{g}_{x}(j)} \cdot \widehat{B}^y_{\widehat{g}_{y}(i)\widehat{g}_{y}(j)}}.
\]
It is natural to conjecture that under the null hypothesis $(g_x, B_x) = (g_y, B_y)$ and when the estimates $(\widehat{g}_x, \widehat{B}_x)$ and $(\widehat{g}_y, \widehat{B}_y)$ are accurate enough, after $\tilde{A}^*$ replaced by $\tilde{A}$ in \eqref{eq:eigcon}, the convergence in \eqref{eq:eigcon} still holds. Therefore, we can use the largest singular value of $\tilde{A}$ as our test statistic:
\[
	T_n = n^{2/3}[\sigma_1(\tilde{A})-2].
\]
Hence, the corresponding level $\alpha$ rejection rule is
\begin{equation}\label{eq:rej}
	\mathrm{Reject}\ \ H_0, \quad T_n\geq t_{(\alpha/2)},
\end{equation}
where $t_{(\alpha/2)}$ is the $\alpha/2$ upper quantile of the $TW_1$ distribution for $\alpha\in(0, 1)$. Since
\[
\sigma_1(\tilde{A}) = \max\{\lambda_1(\tilde{A}), -\lambda_n(\tilde{A})\},
\]
the test statistic also can be written as:
\[
T_n = \max\{n^{2/3}[\lambda_1(\tilde{A})-2], n^{2/3}[-\lambda_n(\tilde{A})-2]\}.
\]

In Section \ref{null} below, we formally state and prove the validity of our test statistic $T_n$ in Theorem \ref{Thm:null} by establishing the asymptotic null distributions of both the largest and smallest eigenvalues $\tilde{A}$ under $H_0$.

\subsection{The Asymptotic Null Distributions}\label{null}

The asymptotic distribution of the test statistic $T_n$ under the null hypothesis depends on the accuracy of the estimated community labels $\widehat{g}_x$ and $\widehat{g}_y$ and the estimated probability matrices $\widehat{B}_x$ and $\widehat{B}_y$. In order to consider the asymptotic behavior of $T_n$, we first consider the following two assumptions:

\begin{Ass}\label{Ass:1}
	The entries of $B_x$ and $B_y$ are uniformly bounded away from 0 and 1, and $B_x$ and $B_y$ have no identical rows.
\end{Ass}

\begin{Ass}\label{Ass:2}
	There exists constants $c_x$ and $c_y$ such that 
	\[
	\min_{1\leq k \leq K_x} n_k^x \geq c_xn/K_x\quad \mathrm{and}\quad \min_{1\leq k \leq K_y} n_k^y \geq c_yn/K_y.
	\]
\end{Ass}

Assumption \ref{Ass:1} requires that the entries of two probabilities $B_x$ and $B_y$ are bounded away from 0 and 1. At the same time, Assumption \ref{Ass:1} requires that $B_x$ and $B_y$ are identifiable. These assumptions are the basic conditions in the study of stochastic block models \cite{Lei:2016,Wang:2017,Hu:2021}. Assumption \ref{Ass:2} requires each community for $\cX$ and $\cY$ has size at least proportional to $n/K_x$ and $n/K_y$, respectively. This a mild condition. For example, it can be guaranteed almost surely when the community label $g$ is generated from a multinomial distribution with $n$ trials and probability $\bm{\pi} = (\pi_1, \cdots, \pi_K)$ such that $\min_{u}\pi_u\geq C_1/K$.

We now give the asymptotic results of both the largest and the smallest eigenvalues of $\tilde{A}$ and delay the proof to the Appendix.

\begin{Thm}\label{Thm:null}
	Suppose that Assumptions (\ref{Ass:1}) and (\ref{Ass:2}) hold. Let $\tilde{A}$ be given as in \eqref{eq:At} using corresponding plug-in estimator $(\widehat{g}_x, \widehat{B}_x)$ and $(\widehat{g}_y, \widehat{B}_y)$. Then under the null hypothesis $H_0: (g_x, B_x) = (g_y, B_y)$, as $n\rightarrow\infty$, if $K_x,K_y = O(n^{1/6-\tau})$ for some $\tau > 0$, we have
	\[
	n^{2/3}[\lambda_1(\tilde{A})-2]\stackrel{d}{\longrightarrow} TW_1\ \mathrm{and}\ n^{2/3}[-\lambda_n(\tilde{A})-2]\stackrel{d}{\longrightarrow} TW_1
	\]
\end{Thm}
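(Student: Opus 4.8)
The plan is to transfer the Tracy--Widom limit for the oracle matrix $\tilde{A}^*$ of \eqref{eq:As}, made precise in \eqref{eq:eigcon} and Lemma \ref{Lemma:1}, to the plug-in matrix $\tilde{A}$ of \eqref{eq:At} through a perturbation analysis of $\tilde{A}-\tilde{A}^*$. First I would work on the event $\mathcal{E}_n$ on which $\widehat{g}_x$ and $\widehat{g}_y$ recover the common label $g:=g_x=g_y$ exactly (up to relabeling of communities) and on which $\max_{u,v}|\widehat{B}^x_{uv}-B^x_{uv}|$ and its $y$-counterpart are of order $K\sqrt{\log n}/n$; under Assumptions \ref{Ass:1}--\ref{Ass:2} and $H_0$, $\bbP(\mathcal{E}_n)\to 1$ by the assumed consistency of the clustering step together with Bernstein's inequality, since each $\widehat{B}^x_{uv}$ is an average of $\asymp (n/K_x)^2$ bounded independent terms. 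Because relabeling communities does not change $\widehat{P}_{ij}$, on $\mathcal{E}_n$ the quantity $\widehat{P}_{ij}-P_{ij}$ depends on $(i,j)$ only through $(g(i),g(j))$, so the off-diagonal array $(\widehat{P}_{ij}-P_{ij})_{i\neq j}$ is block-constant with $K:=K_x=K_y$ blocks and entries of order $K/n$.

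Next I would expand each entry. Writing $S_{ij}=X_{ij}+Y_{ij}$ and $\varphi_{ij}(p)=(S_{ij}-2p)/\sqrt{2p(1-p)}$, we have $\tilde{A}_{ij}-\tilde{A}^*_{ij}=(n-1)^{-1/2}\{\varphi_{ij}(\widehat{P}_{ij})-\varphi_{ij}(P_{ij})\}$ for $i\neq j$ and $0$ on the diagonal; Taylor-expanding about $P_{ij}$ and using $\varphi'_{ij}(P_{ij})=a_{ij}+b_{ij}(S_{ij}-2P_{ij})$ with $a_{ij},b_{ij}$ deterministic and $O(1)$, one obtains $\tilde{A}-\tilde{A}^*=E_1+E_2+E_3$, where $E_1$ has entries $(n-1)^{-1/2}a_{ij}(\widehat{P}_{ij}-P_{ij})$, $E_2$ has entries $(n-1)^{-1/2}b_{ij}(S_{ij}-2P_{ij})(\widehat{P}_{ij}-P_{ij})$, and $E_3$ is the second-order remainder, of order $n^{-1/2}(\widehat{P}_{ij}-P_{ij})^2$ per entry. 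Routine bounds give $\|E_2\|=O_P(Kn^{-1}\log^{1/2}n)$ (conditionally on the $\widehat{P}_{ij}$, $E_2$ is a generalized Wigner-type matrix with row-variance sums $\asymp K^2n^{-2}$) and $\|E_3\|=O_P(K^2n^{-3/2})$ (a rank-$\le K$ block-constant matrix with entries $\asymp K^2n^{-2}$); both are $o_P(n^{-2/3})$ as soon as $K=O(n^{1/6-\tau})$, so by Weyl's inequality $E_2$ and $E_3$ do not affect the $n^{2/3}$-scaled extreme eigenvalues, and it remains to analyse $\lambda_1(\tilde{A}^*+E_1)$ and $\lambda_n(\tilde{A}^*+E_1)$.

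The crux is $E_1$, which is \emph{not} $o_P(n^{-2/3})$ in spectral norm: being a rank-$\le K$ block-constant matrix with entries of order $Kn^{-3/2}$, it has $\|E_1\|=O_P(\sqrt{K/n})$, which dominates $n^{-2/3}$ whenever $\tau<1/2$, so Weyl's inequality is of no use for it. Instead I would exploit that $E_1$ is a \emph{subcritical, trace-free, low-rank} deformation of $\tilde{A}^*$: $\|E_1\|=o_P(1)$, $\mathrm{tr}(E_1)=0$ (its diagonal vanishes), $\mathrm{rank}(E_1)\le K=o(n^{1/3})$, and $n^{-1}\|E_1\|_F^2=O_P(K^2n^{-2})$. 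Moreover, since each $\widehat{B}^x_{uv}-B^x_{uv}$ is an average over $\asymp (n/K)^2$ entries, $(E_1)_{ij}$ is nearly independent of $\tilde{A}^*_{ij}$; replacing $E_1$ by its ``leave-the-$(i,j)$-entry-out'' version $\widetilde{E}_1$ costs only $o_P(n^{-2/3})$ in operator norm and essentially decouples it from the noise driving $\tilde{A}^*$. One is then in the scope of edge-universality results for small deformations of generalized Wigner matrices (via the local semicircle law and Green-function comparison, in the spirit of \cite{Erdos:2012} and the subsequent literature): the right edge of the self-consistent spectral measure of $\tilde{A}^*+\widetilde{E}_1$ lies at $2+O(n^{-1}\|\widetilde{E}_1\|_F^2)=2+o_P(n^{-2/3})$ (the would-be linear correction $n^{-1}\mathrm{tr}(\widetilde{E}_1)$ vanishing), and one gets $n^{2/3}(\lambda_1(\tilde{A}^*+\widetilde{E}_1)-2)\stackrel{d}{\longrightarrow}TW_1$ and $n^{2/3}(-\lambda_n(\tilde{A}^*+\widetilde{E}_1)-2)\stackrel{d}{\longrightarrow}TW_1$, uniformly over the admissible range of $K$. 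Chaining the reductions --- discard $E_3,E_2$ by Weyl, pass from $E_1$ to $\widetilde{E}_1$, apply deformed edge universality --- yields the theorem.

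The step I expect to be the main obstacle is precisely the treatment of $E_1$: one must show that a low-rank deformation whose operator norm \emph{exceeds} the $n^{-2/3}$ fluctuation scale nonetheless leaves the Tracy--Widom edge law intact, with all bounds uniform over the growing rank $K$ and the decaying deformation norm $\sqrt{K/n}$; this is where the quantitative hypothesis $K=O(n^{1/6-\tau})$ is genuinely used, and the decoupling of $E_1$ from the randomness of $\tilde{A}^*$ --- together with the verification that, after it, the relevant matrix is close enough to an exact generalized Wigner matrix --- is a further delicate point.
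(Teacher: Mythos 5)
Your reduction is structurally the same as the paper's: you work on the exact-recovery event, peel off the higher-order and variance-correction terms ($E_2$, $E_3$) by Weyl's inequality, and correctly isolate the first-order block-constant term $E_1$ (the paper's $\Delta'$ in the decomposition $\tilde{A}'=\tilde{A}^*+\Delta'$) as the crux, noting correctly that its operator norm is of order $Kn^{-1/2}$ up to logarithms and therefore dwarfs the $n^{-2/3}$ fluctuation scale. Where you diverge is in how you dispose of $E_1$, and that is where there is a genuine gap. You propose to invoke edge universality for deformed generalized Wigner matrices, but the two hypotheses such results need are exactly the two you cannot supply as stated: (i) the deformation must be independent of (or deterministic relative to) the Wigner noise, and your ``leave-the-$(i,j)$-entry-out'' device does not produce a single decoupled matrix $\widetilde{E}_1$ --- it produces a different matrix for each entry $(i,j)$, so there is no well-defined object to which a deformed-edge theorem applies; and (ii) the rank of the deformation is $K\to\infty$, while the BBP-type statements you are implicitly citing are proved for fixed (or very slowly growing) rank with constants that are not tracked in $K$. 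You flag both issues yourself as ``delicate points,'' but they are not delicate points to be checked later; they are the entire content of the theorem, and no argument is offered for them.

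The paper closes this gap by a more elementary route that sidesteps both problems at once. Write $\Delta'=\Theta\Gamma\Theta^\top$ with $\Theta=(\theta_1,\dots,\theta_K)$ the \emph{deterministic} unit community indicators and $\Gamma$ the (random, data-dependent) $K\times K$ coefficient matrix with $\|\Gamma\|=o_p(Kn^{-1/2}\log n)$. Eigenvector delocalization for $\tilde{A}^*$ (Lemma \ref{Lemma:2}), applied to the deterministic vectors $\theta_k$, gives $\|\Theta^\top\tilde{u}_j\|_2^2=\tilde{O}_p(Kn^{-1})$ uniformly in $j$, hence $|\tilde{u}_j^\top\Delta'\tilde{u}_j|\le\|\Theta^\top\tilde{u}_j\|_2^2\|\Gamma\|=\tilde{O}_p(K^2n^{-3/2}\log n)=o_p(n^{-2/3})$ for $K=O(n^{1/6-\tau})$. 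The dependence between $\Gamma$ and $\tilde{A}^*$ is irrelevant because only the deterministic column space of $\Theta$ and the uniform norm bound on $\Gamma$ enter; no decoupling is needed. This immediately yields the lower bound $\lambda_1(\tilde{A}')\ge\tilde{\lambda}_1-o_p(n^{-2/3})$, and the matching upper bound follows by expanding an arbitrary unit vector in the eigenbasis of $\tilde{A}^*$ and controlling the number of eigenvalues within $2\|\Delta'\|$ of the edge via the counting estimate of Lemma \ref{Lemma:3}. If you want to salvage your outline, you should replace the deformed-universality step by this delocalization-plus-counting argument; as written, the proof of the central step is missing.
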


A natural consequence of Theorem \ref{Thm:null} is that we can bound the asymptotic type I error for the rejection rule \eqref{eq:rej}:
\begin{align*}
	&\ \bbP\{T_n \geq t(\alpha/2)\} \\
	\leq &\ \bbP\{n^{2/3}[\lambda_1(\tilde{A})-2]\geq t(\alpha/2)\} + \bbP\{n^{2/3}[-\lambda_n(\tilde{A})-2]\geq t(\alpha/2)\}\\
	= &\ \alpha/2 + o(1) + \alpha/2 + o(1) = \alpha + o(1).
\end{align*}
which implies the reject rule is reasonable. Therefore, we have the following corollary.
\begin{Cor}\label{Cor:size}
	Under the assumptions of Theorem \ref{Thm:null}, given the level $\alpha$, we have
	\[
	\bbP\{T_n \geq t(\alpha/2)\}\rightarrow\alpha.
	\]
\end{Cor}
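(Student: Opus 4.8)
The plan is to reduce the two-sided rejection probability to the two one-sided edge probabilities already delivered by Theorem~\ref{Thm:null} and then to control their overlap. Write $t:=t(\alpha/2)$, and set $U_n=n^{2/3}[\lambda_1(\tilde A)-2]$ and $V_n=n^{2/3}[-\lambda_n(\tilde A)-2]$. Since $\sigma_1(\tilde A)=\max\{\lambda_1(\tilde A),-\lambda_n(\tilde A)\}$ we have $T_n=\max\{U_n,V_n\}$, so that $\{T_n\geq t\}=\{U_n\geq t\}\cup\{V_n\geq t\}$, and by inclusion--exclusion
\[
\bbP\{T_n\geq t\}=\bbP\{U_n\geq t\}+\bbP\{V_n\geq t\}-\bbP\{U_n\geq t,\,V_n\geq t\}.
\]
First I would pin down the two marginal limits. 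Theorem~\ref{Thm:null} gives $U_n\stackrel{d}{\longrightarrow}TW_1$ and $V_n\stackrel{d}{\longrightarrow}TW_1$ under $H_0$ and the stated assumptions; since $TW_1$ is absolutely continuous, $t$ is a continuity point of its distribution function, so the portmanteau theorem yields $\bbP\{U_n\geq t\}\to\alpha/2$ and $\bbP\{V_n\geq t\}\to\alpha/2$. Fed into the identity above, these already give the conservative upper bound $\limsup_n\bbP\{T_n\geq t\}\leq\alpha$ (the Bonferroni computation displayed just before the corollary), together with the trivial lower bound $\liminf_n\bbP\{T_n\geq t\}\geq\alpha/2$ from either single event.

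The exact value $\alpha$ therefore rests entirely on the intersection term: the corollary is equivalent to the statement $\bbP\{U_n\geq t,\,V_n\geq t\}\to 0$, i.e.\ that the event ``$\lambda_1(\tilde A)$ sits in its upper Tracy--Widom tail'' and the event ``$-\lambda_n(\tilde A)$ sits in its upper Tracy--Widom tail'' are asymptotically disjoint. My plan for this step is to reuse the core of the proof of Theorem~\ref{Thm:null}, which already shows that under $H_0$ the plug-in matrix $\tilde A$ behaves, up to negligible estimation error, like a generalized Wigner matrix, and then to argue that its two edge eigenvalues, located at opposite ends $\pm 2$ of the semicircle, cannot both be pushed above the common threshold $2+t\,n^{-2/3}$ with non-vanishing probability.

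This joint edge estimate is the main obstacle, and it is the one place where the argument must go beyond the marginal Tracy--Widom limits used for Theorem~\ref{Thm:null}, since it genuinely concerns the joint law of $(\lambda_1(\tilde A),\lambda_n(\tilde A))$ rather than the two one-dimensional marginals; it is precisely here that the difference between the Bonferroni bound $\leq\alpha$ and the claimed exact limit $\alpha$ is decided. I would attack it through a rigidity bound for the extreme eigenvalues of $\tilde A$, or through an edge-decorrelation estimate showing that the top and bottom edges separate strongly enough that their simultaneous excursion above the $TW_1$ level is $o(1)$ at the scale $t$. Once this $o(1)$ control on the intersection is in hand, the inclusion--exclusion identity collapses to $\bbP\{T_n\geq t\}\to\alpha/2+\alpha/2-0=\alpha$, which is the claim.
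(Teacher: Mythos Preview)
Your reduction via inclusion--exclusion is exactly right, and in fact you have been more careful than the paper: the paper's own argument for the corollary consists only of the Bonferroni display immediately preceding it, which yields $\limsup_n\bbP\{T_n\geq t\}\leq\alpha$ and stops there. The intersection term is never addressed in the paper.

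The flaw in your proposal is the claim that $\bbP\{U_n\geq t,\,V_n\geq t\}\to 0$. For a generalized Wigner matrix the fluctuations at the top edge $+2$ and at the bottom edge $-2$ are asymptotically \emph{independent}, not asymptotically mutually exclusive: the joint law of $\bigl(n^{2/3}(\lambda_1-2),\,n^{2/3}(-\lambda_n-2)\bigr)$ converges to a product of two independent $TW_1$ laws (this is part of edge universality; for the Gaussian ensembles it is classical, and it transfers to the Wigner class by the results of Erd\H{o}s--Yau--Yin and Lee--Yin already invoked for Theorem~\ref{Thm:null}). Neither of the tools you mention can produce vanishing overlap: rigidity only localizes each extreme eigenvalue to within $n^{-2/3+o(1)}$ of its classical location, which is precisely the scale at which both events $\{U_n\geq t\}$ and $\{V_n\geq t\}$ live, and ``edge decorrelation'' is exactly the asymptotic independence above, which gives a nonzero product limit rather than zero. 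Hence
\[
\bbP\{U_n\geq t,\,V_n\geq t\}\ \longrightarrow\ (\alpha/2)^2,
\]
and your inclusion--exclusion identity then yields
\[
\bbP\{T_n\geq t\}\ \longrightarrow\ \alpha/2+\alpha/2-(\alpha/2)^2\ =\ \alpha-\alpha^2/4,
\]
not $\alpha$. So the exact limit $\alpha$ as written in the corollary cannot be reached by your route, and indeed appears to be a slight overstatement in the paper itself; what is actually established there is the conservative bound $\limsup_n\bbP\{T_n\geq t\}\leq\alpha$, which is all that is needed for the rejection rule to have asymptotic level $\alpha$.
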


\subsection{The Asymptotic Power}
In this subsection, we study the asymptotic power of the proposed test. The following theorem gives a lower bound of the growth rate of test statistic $T_n$ under the alternative hypothesis.

\begin{Thm}\label{Thm:power}
	Suppose that Assumptions (\ref{Ass:1}) and (\ref{Ass:2}) hold. Let $\tilde{A}$ be given as in \eqref{eq:At} using corresponding plug-in estimator $(\widehat{g}_x, \widehat{B}_x)$ and $(\widehat{g}_y, \widehat{B}_y)$. Let $\delta_x$ and $\delta_y$ be the smallest $\infty$ distance among all pairs of distinct rows of $B_x$ and $B_y$, respectively. Then under the alternative hypothesis $H_1: (g_x, B_x) \neq (g_y, B_y)$, we have
	\[
	\sigma_1(\tilde{A})\geq\dfrac{1}{2\sqrt{2}}[(\delta_x\wedge\delta_y)(c_x\wedge c_y)nK^{-2}] + O_p(1).
	\]
\end{Thm}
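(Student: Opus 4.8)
The plan is to isolate the deterministic ``signal'' that $\tilde A$ carries under $H_1$ because of the mismatch between the arithmetic mean $\tfrac12(P^x_{ij}+P^y_{ij})=\tfrac12\bbE(X_{ij}+Y_{ij})$ and the geometric mean $\sqrt{P^x_{ij}P^y_{ij}}$ used for centering, and then to bound $\sigma_1(\tilde A)$ below by the largest singular value of that signal minus the operator norm of everything else. Set $\bar P_{ij}=\sqrt{P^x_{ij}P^y_{ij}}$ and let $M$ be the symmetric matrix with zero diagonal and, for $i\neq j$,
\[
M_{ij}=\frac{(P^x_{ij}+P^y_{ij})-2\bar P_{ij}}{\sqrt{(n-1)\cdot 2\bar P_{ij}(1-\bar P_{ij})}}=\frac{\bigl(\sqrt{P^x_{ij}}-\sqrt{P^y_{ij}}\bigr)^{2}}{\sqrt{(n-1)\cdot 2\bar P_{ij}(1-\bar P_{ij})}},
\]
the last equality being the elementary arithmetic--geometric identity; in particular $M_{ij}\ge 0$ for all $i,j$. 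Let $W$ be the mean-zero matrix with off-diagonal entries $\bigl(X_{ij}+Y_{ij}-(P^x_{ij}+P^y_{ij})\bigr)\big/\sqrt{(n-1)\cdot 2\bar P_{ij}(1-\bar P_{ij})}$ and $0$ on the diagonal, and put $R=\tilde A-M-W$, which collects the error from replacing $\bar P_{ij}$ by the plug-in $\widehat P_{ij}$ both in the numerator and in the scaling. Since $\tilde A,M,W,R$ are symmetric, $\sigma_1(\tilde A)=\|\tilde A\|\ge\|M\|-\|W\|-\|R\|$, so it remains to (i) bound $\|M\|$ below by $\tfrac{1}{2\sqrt{2}}(\delta_x\wedge\delta_y)(c_x\wedge c_y)nK^{-2}$, and (ii) show $\|W\|=O_p(1)$ and that $\|R\|$ is negligible relative to the signal.

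For (i) I would test $M$ against a pair of indicator vectors. Since all entries of $M$ are non-negative, for any index sets $S,T$ one has $\|M\|\ge \mathbf 1_S^{\top}M\mathbf 1_T/(\|\mathbf 1_S\|\,\|\mathbf 1_T\|)=(|S||T|)^{-1/2}\sum_{i\in S,\,j\in T}M_{ij}$. Under $H_1$, each of the four configurations produces, using Assumption \ref{Ass:1} (distinct rows of $B_x$ and of $B_y$, the smallest $\ell^\infty$ gaps being $\delta_x,\delta_y$) together with the pigeonhole principle applied to the cells of the common refinement of $g_x$ and $g_y$ (there are at most $K_xK_y$ of them), a pair of sets $S,T$ whose sizes are bounded below by a constant multiple of $(c_x\wedge c_y)n/K^2$ by Assumption \ref{Ass:2}, and on which $P^x_{ij}$ and $P^y_{ij}$ are forced to differ: concretely $\bigl(\sqrt{P^x_{ij}}-\sqrt{P^y_{ij}}\bigr)^{2}$ is bounded away from $0$ in terms of $\delta_x\wedge\delta_y$ for every $i\in S,\,j\in T$. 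Substituting the explicit entries of $M$ into the displayed inequality, and using $\bar P_{ij}(1-\bar P_{ij})\le\tfrac14$ (Assumption \ref{Ass:1}) in the denominator, then yields the claimed lower bound on $\|M\|$.

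For (ii), $W$ is a generalized Wigner-type matrix: its above-diagonal entries are independent, mean zero, of order $n^{-1/2}$ (Assumption \ref{Ass:1} keeps the ratio of the true variance $P^x_{ij}(1-P^x_{ij})+P^y_{ij}(1-P^y_{ij})$ to the scaling $2\bar P_{ij}(1-\bar P_{ij})$ bounded), and its row-wise variance sums are $\Theta(1)$; hence $\|W\|=O_p(1)$ by the same random-matrix estimates used in the proof of Lemma \ref{Lemma:1}, or by a standard operator-norm bound for symmetric matrices with independent entries. For $R$ I would use that under $H_1$ each of $\widehat g_x,\widehat g_y$ is still consistent for the community assignment of its own correctly specified stochastic block model and that $\widehat K_x\to K_x$, $\widehat K_y\to K_y$, $\widehat B_x\to B_x$, $\widehat B_y\to B_y$ after aligning labels by the optimal permutation; consequently $\widehat P_{ij}=\bar P_{ij}+o_p(1)$ uniformly over pairs both of whose endpoints are correctly clustered, the exceptional node set having vanishingly small relative size. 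Splitting $R=R_1+R_2$, with $R_2$ supported on the exceptional rows/columns (entries $O(n^{-1/2})$, so $\|R_2\|\le\|R_2\|_F$ controlled by the clustering error rate) and $R_1$ having entries $o_p(n^{-1/2})$ uniformly (handled either through $\|R_1\|_F$ or, since $R_1$ is essentially block-constant, via the reduced $K\times K$ matrix), one checks that $\|R\|$ is of smaller order than the signal term from (i); absorbing $\|R\|$, $\|W\|$ and the constant $2$ into the remainder gives $\sigma_1(\tilde A)\ge\tfrac{1}{2\sqrt{2}}(\delta_x\wedge\delta_y)(c_x\wedge c_y)nK^{-2}+O_p(1)$.

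The main obstacle I anticipate is the control of $R$: one must couple the (per-model) clustering-consistency rates with a spectral perturbation bound, making sure that the nodes on which $\widehat g_x$ or $\widehat g_y$ errs are too few to overwhelm the signal and that the estimated block matrices are compared with the truth in the correct label alignment. A secondary difficulty, inside step (i), is to perform the case analysis uniformly over the four alternative configurations: in each case one must exhibit explicitly the sets $S,T$ that are simultaneously large (Assumption \ref{Ass:2}) and on which the arithmetic--geometric gap is bounded away from zero in terms of $\delta_x\wedge\delta_y$, the configuration $g_x\neq g_y$ being the delicate one where the common-refinement pigeonhole argument is needed to produce sets of size of order $n/K^2$.
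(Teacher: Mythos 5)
Your decomposition $\tilde A=M+W+R$, with the signal carried by the arithmetic--geometric gap $M_{ij}\propto(\sqrt{P^x_{ij}}-\sqrt{P^y_{ij}})^2$, is a genuinely different route from the paper's. The paper never forms $M$ and never uses AM--GM in the proof (only in the motivating discussion): it restricts $\tilde A$ to a submatrix $\mathcal A$ whose row and column sets are intersections $\widehat{\cN}_{k,l}=\{i:\widehat g(i)=k,\,g(i)=l\}$ of estimated with true communities (each of size at least $c\,n/K^2$ by pigeonhole), bounds $\|\tilde A\|\ge\|\mathcal A\|\ge(2n)^{-1/2}\bigl(2\|\mathcal P-\widehat{\mathcal P}\|-\|\mathcal Z-2\mathcal P\|\bigr)$ with $\mathcal Z$ the corresponding submatrix of $\cX+\cY$, and extracts the signal from $\|\mathcal P-\widehat{\mathcal P}\|$: on the chosen submatrix $\widehat{\mathcal P}$ is constant while $\mathcal P$ contains two distinct constant blocks whose entries differ by at least $(\delta_x\wedge\delta_y)/2$, because two true communities with $\ell^{\infty}$-separated rows of $B$ end up merged inside one estimated cell. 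That is how the \emph{within-model} row separations $\delta_x,\delta_y$ enter the bound; it is the power argument of Lei \cite{Lei:2016} transplanted to the two-sample setting.

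The genuine gap is in your step (i). Your mechanism needs, under every alternative, sets $S,T$ of size $\asymp n/K^2$ on which $|P^x_{ij}-P^y_{ij}|$ is bounded below in terms of $\delta_x\wedge\delta_y$, and this is not available. When $g_x=g_y$ but $B_x\ne B_y$, the pointwise gap equals $|B^x_{uv}-B^y_{uv}|$ on community blocks, a \emph{between-model} quantity unrelated to $\delta_x,\delta_y$ (both can be large while $\max_{u,v}|B^x_{uv}-B^y_{uv}|$ is arbitrarily small), so no bound of the stated form follows from $M$ alone. When $B_x=B_y$ but $g_x\ne g_y$, the set of nodes on which the two labelings disagree (after optimal alignment) may be a single node, so one of $S,T$ has cardinality $1$ and $\mathbf 1_S^{\top}M\mathbf 1_T/\sqrt{|S||T|}=O(1)$; the common-refinement pigeonhole produces large refinement cells but cannot manufacture a large set of \emph{disagreeing} pairs. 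Moreover, even where the gap is present, $(\sqrt a-\sqrt b)^2=(a-b)^2/(\sqrt a+\sqrt b)^2\asymp(a-b)^2$ under Assumption \ref{Ass:1}, so your route yields a bound quadratic in the separation rather than the linear dependence in the statement. Your parts (ii) are fine and parallel the paper: $\|W\|=O_p(1)$ corresponds to the paper's $\|\mathcal Z-2\mathcal P\|=O_p(\sqrt{2n})$, and the treatment of $R$ via per-model consistency is plausible. To repair step (i) you would have to relocate the signal from the pointwise gap $P^x_{ij}\ne P^y_{ij}$ to the mismatch between the block structure of $\widehat P$ and that of $P$ on a chosen submatrix, which is exactly what the paper does.
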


Theorem \ref{Thm:power} is proved in Appendix. The main view is that if the two samples come from different stochastic block models, the sum of corresponding entries of two adjacency matrices cannot be seen as Binomial distribution, and hence it is impossible to obtain a reasonable residual matrix. Meanwhile, the above Theorem \ref{Thm:power} shows that $\sigma_1(\tilde{A})$ diverges at least rate $nK^{-2}$ as $n\rightarrow\infty$ under alternative hypothesis $H_1$. Then, we have the following corollary saying that the asymptotic power is almost equal to 1.

\begin{Cor}\label{Cor:power}
	Under the assumptions of Theorem \ref{Thm:power}, we have
	\[
	\bbP_{H_1}\{T_n > t_{\alpha/2}\}\rightarrow1.
	\]
\end{Cor}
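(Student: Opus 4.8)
The plan is to combine the signal lower bound of Theorem~\ref{Thm:power} with the fact that the critical value $t_{\alpha/2}$ is a fixed constant, so that it suffices to show $T_n\to\infty$ in probability under $H_1$. First I would check that the leading constant in Theorem~\ref{Thm:power} is genuinely positive: Assumption~\ref{Ass:1} forces $\delta_x\wedge\delta_y>0$ (since $B_x,B_y$ have no identical rows and their entries are bounded away from $0$ and $1$), and Assumption~\ref{Ass:2} gives $c_x\wedge c_y>0$, so $C:=\frac{1}{2\sqrt2}(\delta_x\wedge\delta_y)(c_x\wedge c_y)$ is a strictly positive constant independent of $n$. Theorem~\ref{Thm:power} then says there is a sequence $R_n=O_p(1)$ with $\sigma_1(\tilde A)\ge C\,nK^{-2}+R_n$.

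Next I would unfold $T_n=n^{2/3}[\sigma_1(\tilde A)-2]$ to obtain
\[
T_n\;\ge\;n^{2/3}\big(C\,nK^{-2}-2+R_n\big)\;=\;C\,n^{5/3}K^{-2}-2n^{2/3}+n^{2/3}R_n .
\]
Because $K_x,K_y$ diverge slower than $\sqrt n$ (which is implied by the regime $K_x,K_y=O(n^{1/6-\tau})$ considered throughout the paper), the term $C\,n^{5/3}K^{-2}$ dominates $n^{2/3}$. The only subtlety is that $R_n$ is merely $O_p(1)$ and could be negative, so the bound must be used on a high-probability event: given $\varepsilon>0$, choose $M$ with $\bbP(|R_n|>M)<\varepsilon$ for all large $n$; on the complementary event $T_n\ge C\,n^{5/3}K^{-2}-(2+M)n^{2/3}\to\infty$. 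Hence $\bbP(T_n\le t_{\alpha/2})\le\varepsilon+o(1)$, and since $\varepsilon$ is arbitrary, $\bbP_{H_1}\{T_n>t_{\alpha/2}\}\to1$, which is the assertion.

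There is essentially no further obstacle in the corollary itself; the real work --- controlling the perturbation that the plug-in estimators $(\widehat g_x,\widehat B_x)$ and $(\widehat g_y,\widehat B_y)$ introduce into $\tilde A$, so that the remainder in Theorem~\ref{Thm:power} is truly of constant order rather than something that could cancel the $nK^{-2}$ signal --- is already carried out in the proof of Theorem~\ref{Thm:power}. The present argument only has to pass from ``the statistic diverges'' to ``the fixed-level test rejects with probability approaching one'', which is the short bookkeeping above.
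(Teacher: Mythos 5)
Your argument is correct and is exactly the route the paper takes (the paper leaves this corollary's proof implicit in the two sentences following Theorem \ref{Thm:power}): the lower bound $\sigma_1(\tilde A)\geq C\,nK^{-2}+O_p(1)$ with $C>0$, together with $K=O(n^{1/6-\tau})$ so that $nK^{-2}\to\infty$, forces $T_n\to\infty$ in probability while $t_{\alpha/2}$ stays fixed. Your explicit handling of the $O_p(1)$ remainder via a high-probability event is the right bookkeeping and adds rigor the paper omits.
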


Therefore, the Corollarys \ref{Cor:size} and \ref{Cor:power} suggest that the null and the alternative hypothesis are well separated, and our proposed test is asymptotically powerful against alternative hypothesis $H_1: (g_x, B_x) \neq (g_y, B_y)$.

\section{Simulation}\label{Simulation}

In this section, we implement numerical simulations to evaluate the performance of our proposed method. In the stochastic block model setting, the correction Bayesian information criterion and the spectral clustering method are used to estimate the number of communities and the community label, respectively. In the simulation, we consider the test statistic $T_n  = \max\{n^{2/3}[\lambda_1(\tilde{A})-2], n^{2/3}[-\lambda_n(\tilde{A})-2]\}$. Further, we compare our method with the TST-MD in \cite{Chen:2021c} and TST-S in \cite{Chen:2021s}.

\subsection{Simulation 1: The Null Distribution Under Stochastic Block Models and a Bootstrap Correction}

In the first simulation, we verify the result in Theorem \ref{Thm:null} by examining the finite sample null distribution of the test statistic $T_n$. Similarly, since the speed of convergence to a limiting distribution may be slow, we consider the bootstrap correction of the finite sample. The bootstrap correction method was first proposed in \cite{Bickel:2015} and later considered in \cite{Lei:2016} and \cite{Hu:2021}. Here, we extend their methods to our setting.

For two adjacency matrices $\cX$ and $\cY$ and null hypothesis $H_0: (g_x, B_x) = (g_y, B_y)$, the bootstrap corrected test statistic is calculated as the following:

1. Using consistent estimated methods to estimate $\widehat{K}_x = \widehat{K}_y = \widehat{K}$, then get the estimators $(\widehat{g}_x,\widehat{B}_x)$ and $(\widehat{g}_y,\widehat{B}_y)$;

2. Calculate $\tilde{A}$ as in \eqref{eq:At} and its the largest and the smallest eigenvalues $\lambda_1(\tilde{A}), \lambda_n(\tilde{A})$, respectively.

3. For $m = 1,2,\cdots,M$, 

\qquad (i) Let $\cX^{(m)}$ and $\cY^{(m)}$ be two adjacency matrices independently generated from probability matrix $\widehat{P}_{ij} = \sqrt{\widehat{P}_{ij}^x \cdot \widehat{P}_{ij}^y} = \sqrt{\widehat{B}_{\widehat{g}_x(i)\widehat{g}_x(j)}^x \cdot \widehat{B}_{\widehat{g}_y(i)\widehat{g}_y(j)}^y}$.

\qquad (ii) Let $\tilde{A}^{(m)} = (\tilde{A}^{(m)}_{ij})_{1\leq i,j \leq n}$, where $\tilde{A}^{(m)}_{ii} = 0$ and
\[
\quad \tilde{A}^{(m)}_{ij} = \dfrac{X_{ij}^{(m)} + Y_{ij}^{(m)} - 2\widehat{P}_{ij}}{\sqrt{(n-1)\cdot 2\widehat{P}_{ij}(1-\widehat{P}_{ij})}}, \quad 1\leq i < j\leq n.
\]

\qquad (iii) Let $\lambda_1^{(m)}$ and $\lambda_n^{(m)}$ be the largest and the smallest eigenvalues of $\tilde{A}^{(m)}$, respectively.

4. Let $(\widehat{\mu}_1, \widehat{\sigma}_1)$ and $(\widehat{\mu}_n, \widehat{\sigma}_n)$ be the sample mean and variance of $(\lambda_1^{(m)}: 1\leq m\leq M)$ and $(\lambda_n^{(m)}: 1\leq m\leq M)$, respectively.

5. The bootstrap corrected test statistic is calculated as
\[
T_{n}^{boot} = \mu_{\mathrm{tw}} + \sigma_{\mathrm{tw}}\max\left\{\dfrac{\lambda_1(\tilde{A}) - \widehat{\mu}_1}{\widehat{\sigma}_1}, \dfrac{-\lambda_n(\tilde{A}) - \widehat{\mu}_n}{\widehat{\sigma}_n}\right\},
\]
where $\mu_{\mathrm{tw}}$ and $\sigma_{\mathrm{tw}}$ are the mean and standard deviation of the Tracy-Widom distribution, respectively. In our simulations, we choose $M = 50$.

In Figure \ref{figure:TW}, we plot the distribution of the test statistic $T_n$ and the bootstrap corrected test statistic $T_n^{boot}$ based on 1000 data replications. In this simulation, we set $K_x = K_y =3$ and use equal-sized communities. The connection probability between communities $u$ and $v$ is $B_{uv}^x = B_{uv}^y = 0.1(1 + 4\times\mathbf{1}(u=v))$. Let sample sizes be $n = 600$ and $n = 1200$. We use the consistent method to get $\widehat{g}_x$ and $\widehat{g}_y$, then get $\widehat{B}_x$ and $\widehat{B}_y$. It is clear that the finite sample null distribution is systematically different from the limiting distribution when $n = 600$, and the difference is reduced but still visible when $n = 1200$. It is worth noting that although the limit distribution of finite sample shifts to the right relative to the limit distribution, there is little difference between the right tail of the distribution of finite sample and that of the limit distribution.

\begin{figure*}[h]
	\includegraphics[width = 8cm]{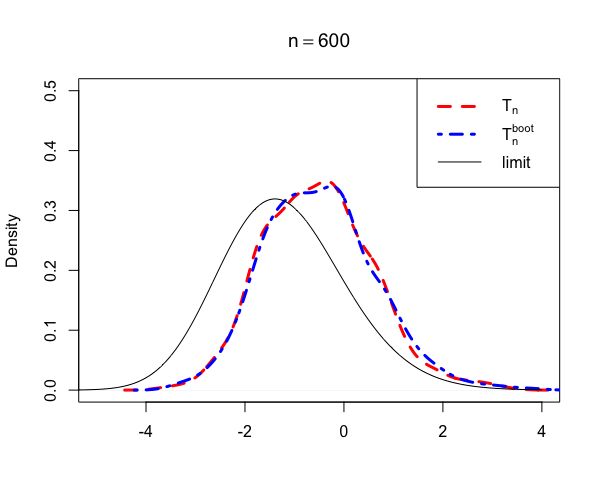}
	\includegraphics[width = 8cm]{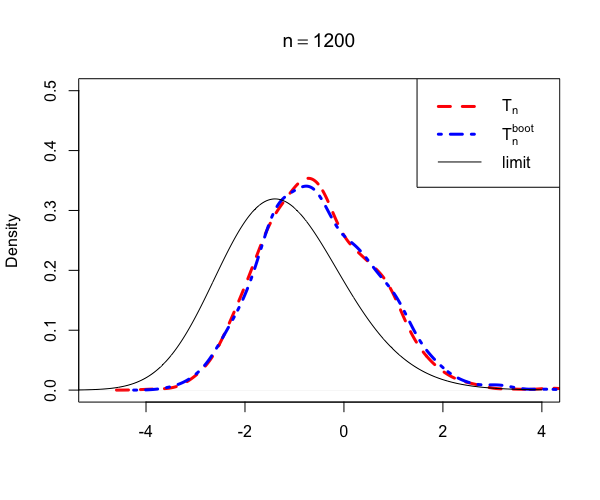}
	\caption{Null densities under the stochastic block model in Simulation 1 with $n = 600$ (left plot) and $n = 1200$ (right plot). The red dashed lines, blue dash-dotted lines, and black solid lines show the densities of the test statistic $T_n$, the bootstrap corrected test statistic $T_n^{boot}$, and the theoretical limit, respectively.}
	\label{figure:TW}
\end{figure*}

\subsection{Simulation 2: Empirical Size for Hypothesis \eqref{eq:problem}}

In this subsection, we investigate the empirical sizes of the proposed testing procedure for varying $K_x$, $K_y$, $B_x$, and $B_y$ under the null hypothesis. We set the connection probability between communities $u$ and $v$ as $B_{uv}^x = B_{uv}^y = r (1 + 3\times\mathbf{1}(u=v))$, where $r$ be used to measure the sparsity of the network. Let $K_x, K_y \in \{2,3,4,5,6,10\}, r\in\{0.05, 0.1, 0.2\}$, and the size of each community be 200. Table \ref{Table:size} reports the empirical size based on 200 data replications. From Table \ref{Table:size}, the empirical sizes of $T_n$ and $T_n^{boot}$ are closer to the nominal level and there is little difference between $T_n$'s Type I errors and $T_n^{boot}$'s Type I errors. When $T_n$ has a high empirical size, the empirical size can be reduced by bootstrap corrected statistic $T_n^{boot}$. Compared with the sparse situation ($r=0.05$), the test statistic works better. Meanwhile, we can see that the TST-MD and TST-S are not robust. In the majority of cases, the empirical sizes of TST-MD and TST-S tend to letter undersized.

\begin{table*}[h]
\setlength{\belowcaptionskip}{0.5cm}
\centering
\caption{Empirical size at nominal level $\alpha = 0.05$ for hypothesis test $H_0: (g_x, B_x) = (g_y, B_y)\ \mathrm{v.s.}\ H_1: (g_x, B_x) \neq (g_y, B_y)$. In the right half of the table, the values in the parentheses are the empirical size of TST-MD (left) and TST-S (right), respectively.}
\label{Table:size}
\begin{tabular*}{\textwidth}{@{\extracolsep{\fill}}cccccccc}
\hline
\multirow{2}{*}{} & \multicolumn{3}{c}{$T_n$} & & \multicolumn{3}{c}{$T_n^{boot}$}                                   \\ \cline{2-4} \cline{6-8} 
& $r=0.05$ & $r=0.1$ & $r=0.2$ & & $r=0.05$ & $r=0.1$ & $r=0.2$ \\ \hline
$K=2$ & 0.08 &	0.04 &	0.01 & & 0.06 (0.075, 0.075) & 0.05 (0.04, 0.03) & 0.04 (0.02, 0.06) \\
$K=3$ & 0.10 & 0.05 & 0.05 & & 0.05 (0.055, 0.065) & 0.05 (0.025, 0.06) & 0.09(0.015, 0.055) \\
$K=4$ & 0.05 & 0.04 & 0.04 & & 0.04 (0.045, 0.055) & 0.05 (0.015, 0.035) & 0.05 (0, 0.015) \\
$K=5$ & 0.07 & 0.04 & 0.04 & & 0.02 (0.045, 0.03) & 0.04 (0.01, 0.035) & 0.05 (0.01,0.03) \\
$K=6$ & 0.05 & 0.04 & 0.05 & & 0.04 (0.03, 0.035) & 0.04 (0.02, 0.04) & 0.06 (0, 0.055) \\
$K=10$ & 0.08 & 0.05 & 0.03 & & 0.04 (0.005, 0.045) & 0.06 (0, 0.005) & 0.05 (0.01, 0.025) \\
\hline
\end{tabular*}
\end{table*}

\subsection{Simulation 3: Empirical Power for Hypothesis \eqref{eq:problem}}

In this subsection, we compare the power for the test statistic $T_n$, the bootstrap corrected statistic $T_n^{boot}$, TST-MD, and TST-S against various alternative cases under hypothesis test problem \eqref{eq:problem}. We mainly consider three difference alternative cases: i) $g_x = g_y$ but $B_x\neq B_y$; ii) $B_x = B_y$ but $g_x \neq g_y$, iii) $K_x\neq K_y$. For the first alternative, we generate $\cX$ from the edge probability between communities $u$ and $v$ as $B_{uv}^x = r (0.5 + 3\times\mathbf{1}(u=v))$, and generate $\cY$ from the connection probability between communities $u$ and $v$ as $B_{uv}^y = r (3 + 5\times\mathbf{1}(u=v))$ for $r \in \{0.01, 0.05, 0.1\}$. We set the size of each block to be 200. For the second alternative, we set the connection probability between communities $u$ and $v$ as $B_{uv}^x = B_{uv}^y = r (1 + 3\times\mathbf{1}(u=v))$ for $r \in \{0.01, 0.05, 0.1\}$. For the community label of $\cX$, we set $g_x = (1, \cdots, 1, \cdots, K_x, \cdots, K_x)$. For the community membership label of $\cY$, the community membership label of $g_y$ is generated from a multinomial distribution with $n$ trials and probability $\bm{\pi} = (1/K_y, \cdots, 1/K_y)$. For the two settings, we set $K_x, K_y \in\{2,3,4,5,6,10\}$. For the third alternative, we let $K_x\in\{1, 2,3,4,6,10\}$ and $K_y=K_x+2$. We generate $cX$ and $\cY$ with edge probability $4r$ within community and $r$ between communities for $r\in\{0.05,0.1,0.2\}$. The sample size is $n=1200$ with equal-sized communities. For both alternatives, Tables \ref{table:power1}-\ref{table:power3} report the empirical power for hypothesis test \eqref{eq:problem} based on 200 data replication. It can be found that the proposed method and TST-MD can successfully detect all three types of alternative hypotheses, which indicates that our proposed two-sample test is powerful. In the second alternative model, however, the empirical power of TST-S is less than 1. As discussed in Section 1. the TST-S method will not separate the null hypothesis and alternative hypothesis well when the difference between the two models is small. For the third alternative model, we consider the case of $K_x\neq K_y$. In fact, when $\widehat{K}_x\neq\widehat{K}_y$, it is natural that $\widehat{g}_x\neq\widehat{g}_y$ and $\widehat{B}_x\neq\widehat{B}_y$. Then, since $\widehat{K}_x$ and $\widehat{K}_y$ are the consistent estimator of $\widehat{K}_x$ and $\widehat{K}_y$. we can reject the null hypothesis with probability tending to 1.

\begin{table*}[h]
\setlength{\belowcaptionskip}{0.5cm}
\centering
\caption{Empirical power of $T_n$ and $T_n^{boot}$ at nominal level $\alpha = 0.05$ for the first alternative. In the right half of the table, the values in the parentheses are the power of TST-MD (left) and TST-S (right), respectively.}
\label{table:power1}
\begin{tabular*}{\textwidth}{@{\extracolsep{\fill}}cccccccc}
\hline
\multirow{2}{*}{} & \multicolumn{3}{c}{$T_n$}   & \multicolumn{3}{c}{$T_n^{boot}$}  \\ \cline{2-4} \cline{6-8} 
                  & $r=0.01$ & $r=0.05$ & $r=0.1$ & & $r=0.01$ & $r=0.05$ & $r=0.1$ \\ \hline
$K=2$ & 1 & 1 & 1 & & 1 (1, 1) & 1 (1, 1) & 1 (1, 1) \\
$K=3$ & 1 & 1 & 1 & & 1 (1, 1) & 1 (1, 1) & 1 (1, 1) \\
$K=4$ & 1 & 1 & 1 & & 1 (1, 1) & 1 (1, 1) & 1 (1, 1) \\
$K=5$ & 1 & 1 & 1 & & 1 (1, 1) & 1 (1, 1) & 1 (1, 1) \\
$K=6$ & 1 & 1 & 1 & & 1 (1, 1) & 1 (1, 1) & 1 (1, 1) \\
$K=10$ & 1 & 1 & 1 & & 1 (1, 1) & 1 (1, 1) & 1 (1, 1) \\ \hline
\end{tabular*}
\end{table*}

\begin{table*}[h]
\setlength{\belowcaptionskip}{0.5cm}
\centering
\caption{Empirical power of $T_n$ and $T_n^{boot}$ at nominal level $\alpha = 0.05$ for the second alternative. In the right half of the table, the values in the parentheses are the power of TST-MD (left) and TST-S (right), respectively.}
\label{table:power2}
\begin{tabular*}{\textwidth}{@{\extracolsep{\fill}}cccccccc}
\hline
\multirow{2}{*}{} & \multicolumn{3}{c}{$T_n$}   & \multicolumn{3}{c}{$T_n^{boot}$}  \\ \cline{2-4} \cline{6-8} 
                  & $r=0.01$ & $r=0.05$ & $r=0.1$ & & $r=0.01$ & $r=0.05$ & $r=0.1$ \\ \hline
$K=2$ & 1 & 1 & 1 & & 1 (1, 0.055) & 1 (1, 0.015) & 1 (1, 0.335) \\
$K=3$ & 1 & 1 & 1 & & 1 (1, 0.09) & 1 (1, 0.2) & 1 (1, 0.505) \\
$K=4$ & 1 & 1 & 1 & & 1 (1, 0.03) & 1 (1, 0.22) & 1 (1, 0.565) \\
$K=5$ & 1 & 1 & 1 & & 1 (1, 0.0.085) & 1 (1, 0.0.225) & 1 (1, 0.615) \\
$K=6$ & 1 & 1 & 1 & & 1 (0.985, 0.06) & 1 (1, 0.215) & 1 (1, 0.53) \\
$K=10$ & 1 & 1 & 1 & & 1 (0.995, 0.045) & 1 (1, 0.135) & 1 (1, 0.505) \\ \hline
\end{tabular*}
\end{table*}

\begin{table*}[h]
\setlength{\belowcaptionskip}{0.5cm}
\centering
\caption{Empirical power of $T_n$ and $T_n^{boot}$ at nominal level $\alpha = 0.05$ for the third alternative. In the right half of the table, the values in the parentheses are the power of TST-MD (left) and TST-S (right), respectively.}
\label{table:power3}
\begin{tabular*}{\textwidth}{@{\extracolsep{\fill}}cccccccc}
\hline
\multirow{2}{*}{} & \multicolumn{3}{c}{$T_n$}   & \multicolumn{3}{c}{$T_n^{boot}$}  \\ \cline{2-4} \cline{6-8} 
                  & $r=0.01$ & $r=0.05$ & $r=0.1$ & & $r=0.01$ & $r=0.05$ & $r=0.1$ \\ \hline
$K=2$ & 1 & 1 & 1 & & 1 (1, 1) & 1 (1, 1) & 1 (1, 1) \\
$K=3$ & 1 & 1 & 1 & & 1 (1, 1) & 1 (1, 1) & 1 (1, 1) \\
$K=4$ & 1 & 1 & 1 & & 1 (1, 1) & 1 (1, 1) & 1 (1, 1) \\
$K=5$ & 1 & 1 & 1 & & 1 (1, 1) & 1 (1, 1) & 1 (1, 1) \\
$K=6$ & 1 & 1 & 1 & & 1 (1, 1) & 1 (1, 1) & 1 (1, 1) \\
$K=10$ & 1 & 1 & 1 & & 1 (1, 1) & 1 (1, 1) & 1 (1, 1) \\ \hline
\end{tabular*}
\end{table*}

\section{Real Example}\label{Real}
In this section, we apply the proposed method to two real network data. We analyze two Twitter networks \cite{Greene:2013} (the original data used can be found at \url{http://mlg.ucd.ie/networks}). The first network data is the Twitter social network between 419 members of parliament (MPs) from the United Kindom. There are many Twitter social relationships in the real world, such as `follows', `reweeter', and `mentions'. Hence, the different relationships may lead to different networks. For example, for the relationship `follows', the edge between MP $i$ and MP $j$ exists when MP $i$ (or MP $i$) follows MP $j$ (or MP $j$). Other relationships are similar to the relationship `follows'. In addition, all members of parliament can be divided into five groups according to political parties (Conservatives, Labour, Liberal Democrats, SNP, and others.). Here, we delete the 38 members who have no connection with other members in two relationships `follows' and `reweeter', 7 members who do not belong to any political party, as well as 5 SNP members. Finally, we get two networks $\cX_1$ and $\cY_1$ corresponding to relationships `follows' and `reweeter', respectively, which contain 369 nodes. Using the proposed testing procedure, we obtain $T_n=399.0579$ and $T_n^{boot}=374.0122$. Since, $t_{0.975}=1.453$ for Tracy-Widom distribution, we reject $H_0:(g_{x_1},B_{x_1})=(g_{y_1},B_{y_1})$ at the level of 0.05. The first network data is the Twitter social network between 276 politicians from the five parties in the Republic of Ireland (Fine Gael, Labour Party, Fianna F\`{a}il, Sinn F\`{e}in, and Green Party). Similarly, we get two social networks $\cX_2$ and $\cY_2$ corresponding to relationships `follows' and `reweeter', respectively, which contain 276 nodes. Then we calculate the values of test statistics, and obtain $T_n=318.9236$ and $T_n^{boot}=324.2086$. Hence, we reject $H_0:(g_{x_2},B_{x_2})=(g_{y_2},B_{y_2})$ at the level of 0.05.

\begin{figure*}[h]
\centering
\vspace{-0.35cm}
\setlength{\abovecaptionskip}{-2pt}
\subfigtopskip=2pt
\subfigbottomskip=2pt
\subfigcapskip=-5pt
\subfigure[uk-follows]{
\label{pcolor.sub.m2}
\includegraphics[width=0.23\linewidth]{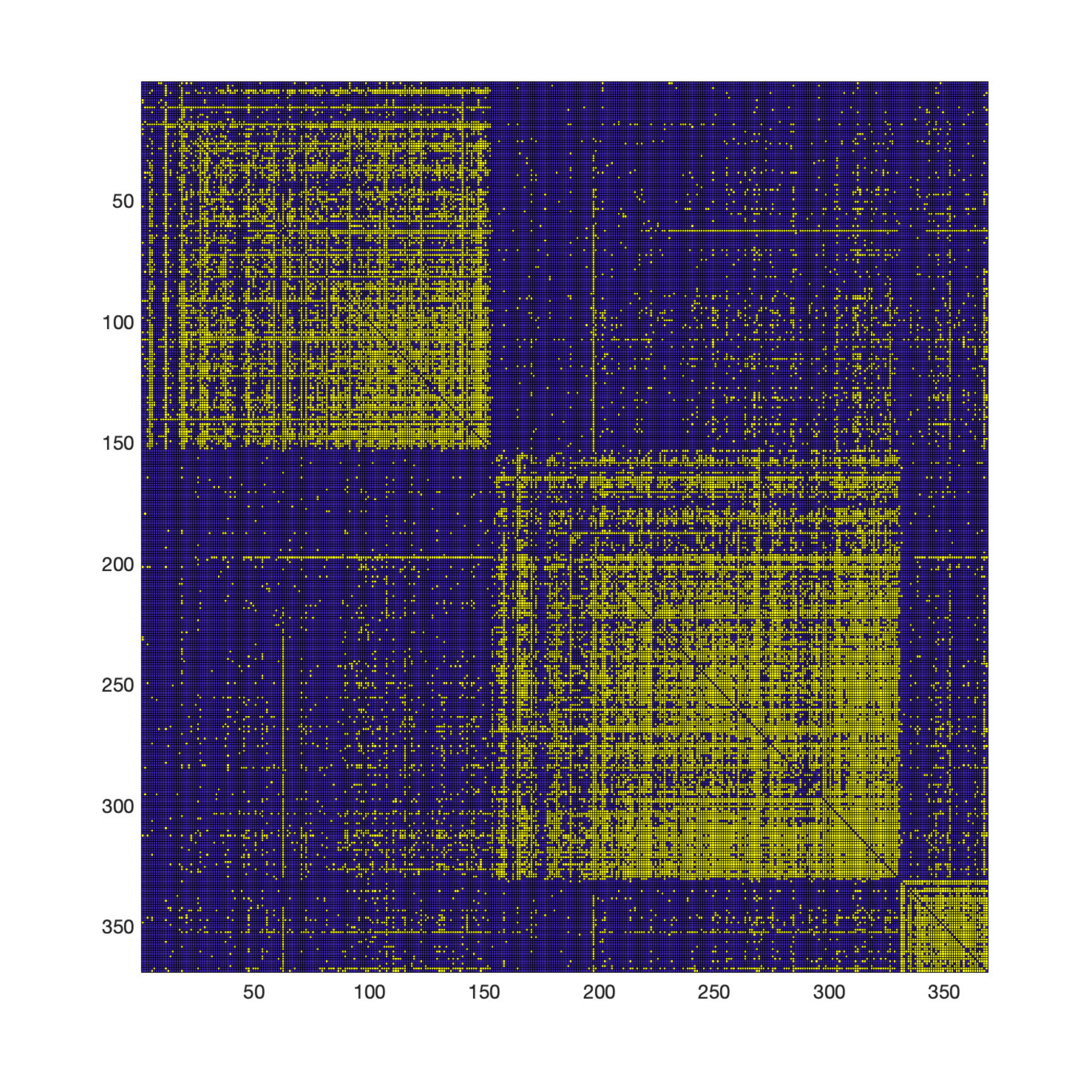}} 
\subfigure[uk-reweets]{
\label{pcolor.sub.m3}
\includegraphics[width=0.23\linewidth]{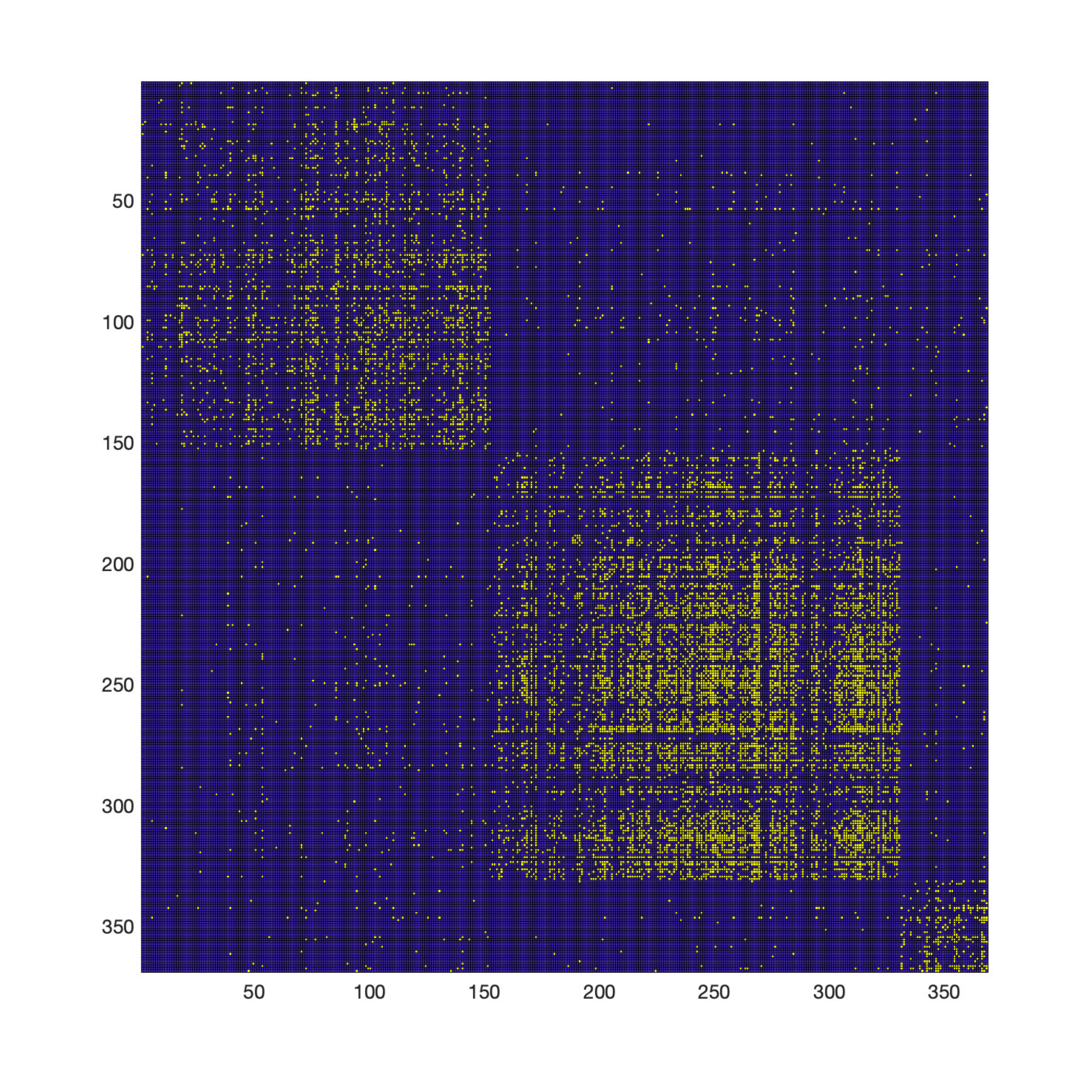}}
\subfigure[ie-follows]{
\label{pcolor.sub.m2}
\includegraphics[width=0.23\linewidth]{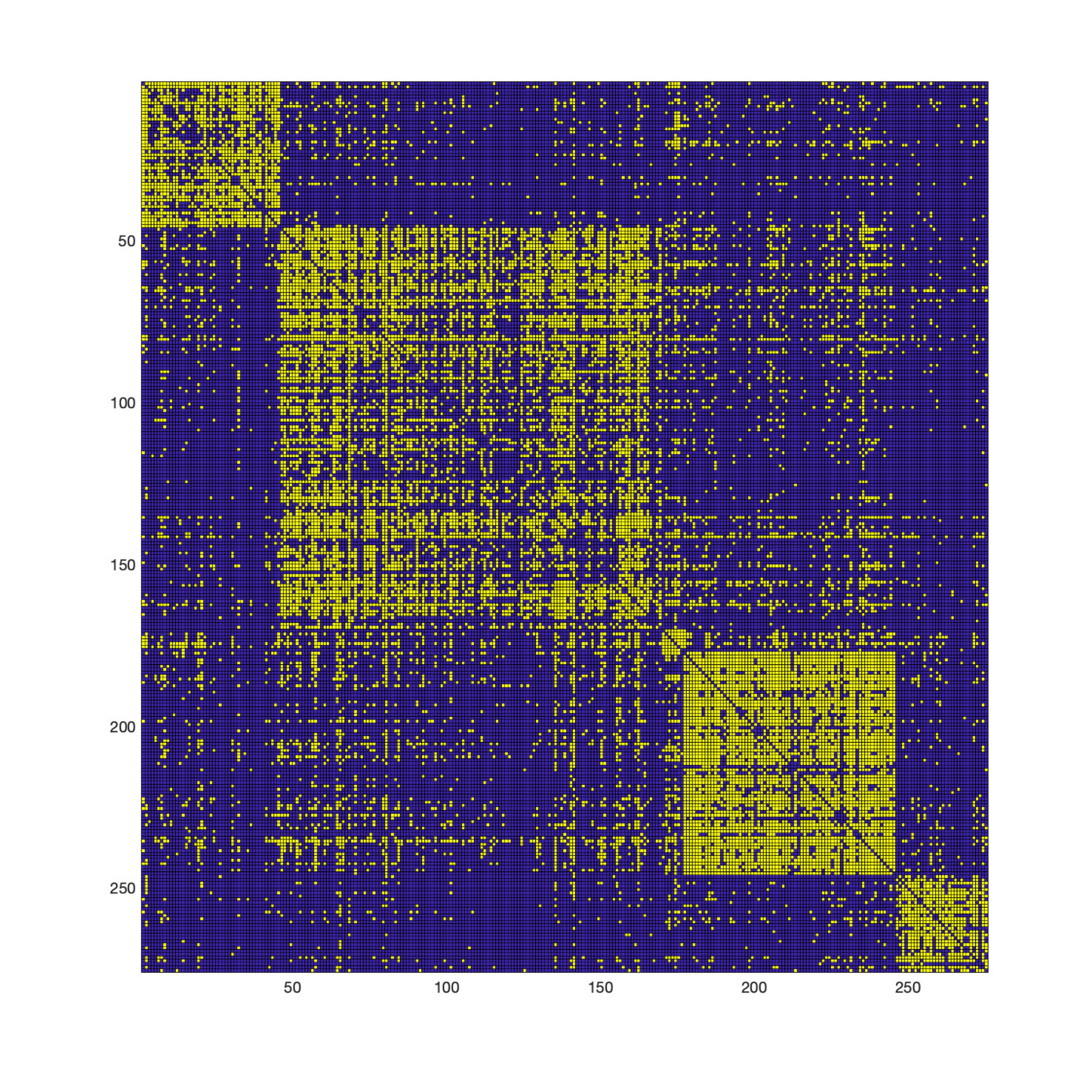}}
\subfigure[ie-reweets]{
\label{pcolor.sub.m3}
\includegraphics[width=0.23\linewidth]{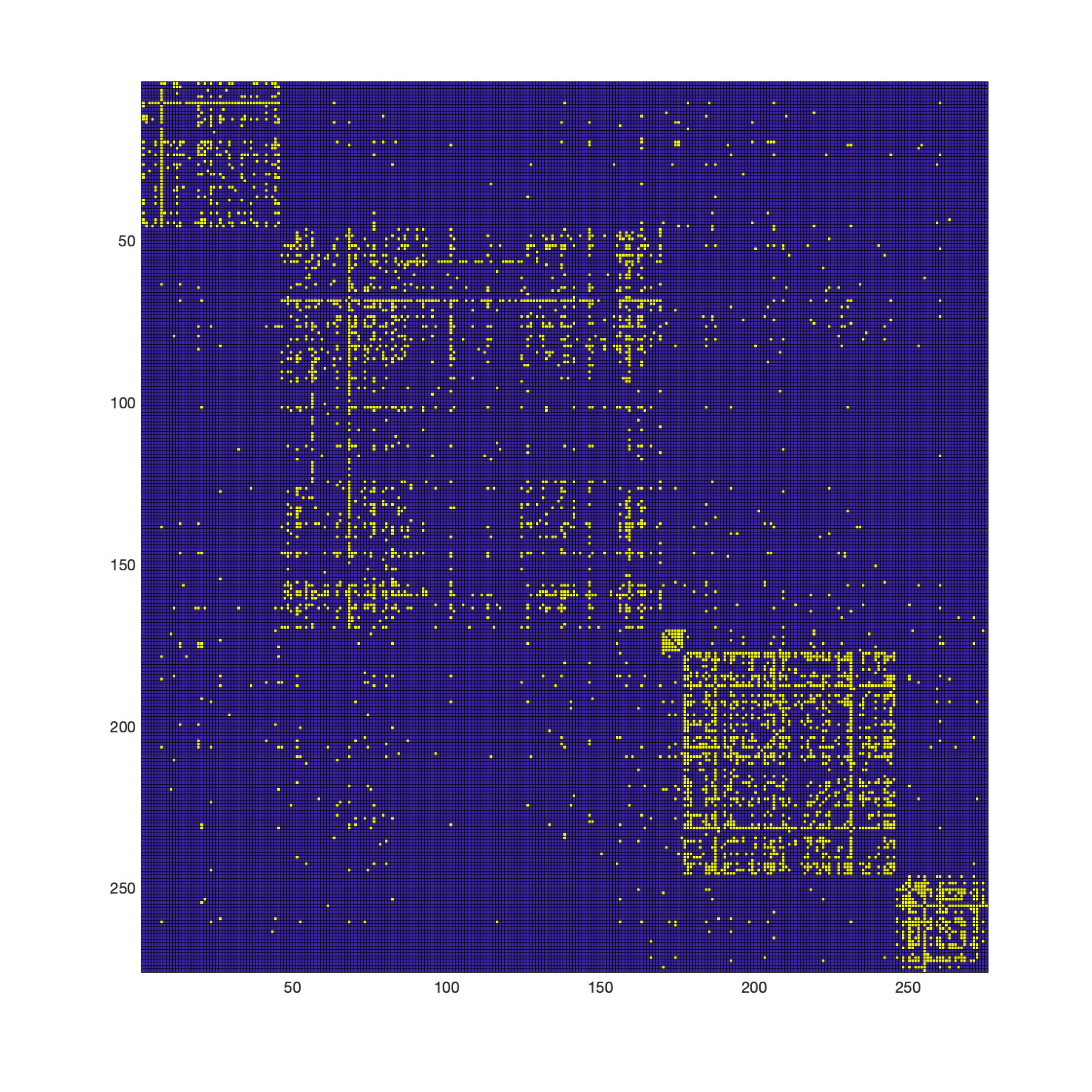}}
\vspace{0.5cm}
\caption{The adjacency matrices corresponding to the two networks of the UK members of the Parliaments Twitter network (two figures on the left) and the two networks of the Irish politicians' Twitter network (two figures on the right), where the rows/columns are sorted with respect to the political party memberships.}
\label{fig:1}
\end{figure*}

Figure \ref{fig:1} plots the visual representation of the four social networks. It is easy to see that in the two samples of the two data, although the network has the same community label, the connection probability is obviously different. The connection probability of the 'reweter' relationship is significantly lower than that of the 'follows' relationship, which implies that $K_{x_1}=K_{y_1}$ and $g_{x_1}=g_{y_1}$ but $B_{x_1}=B_{y_1}$, and $\cX_2$ and $\cY_2$ has similar results. Hence, for the two network data, we should reject the null hypothesis, which is consistent with the numerical results.

\section{Discussion}\label{Discussion}

In this paper, we have proposed a new test statistic for two samples of stochastic block models. Specifically, we first consider combining two samples into one sample to obtain a new observed matrix, then obtain the novel test statistic based on the largest singular value of a residual matrix obtained by subtracting the geometric mean of two estimated block mean effects from the sum of two observed adjacency matrices. Correspondingly, we have demonstrated that its asymptotic null distribution is a Tracy-Widom distribution when $K_x = K_y = o(n^{1/6})$. This test has extended the method proposed by \cite{Lei:2016} of testing one sample to two samples. Empirically, we have demonstrated that the size and power of the test are valid.

It is worth noting that our test method does not work well when the network is spare. However, we may wonder if it can be extended to the sparse case. In the sparse network, the accuracy of the community label estimation may be low. For this case, it is possible to use the community detection algorithm in the sparse network, for instance, regularized spectral clustering \citep{Amini:2013}, profile-pseudo likelihood methods \citep{Wang:2021}. However, in the real network, the degree-corrected stochastic block model and the mixed membership stochastic block model may be more widely used. The proposed method may not work when applied directly to these two models. We need to consider other statistics. In addition, although the proposed can separate the null hypothesis and alternative hypothesis, the four alternative cases are difficult to be distinguished, which is crucial in the real application. Intuitively, we can judge whether $g_x$ is equal to $g_y$. But, it is no theoretical guarantee. Hence, it would be of interest to investigate the possibility of these issues in future work.

\section{Appendix}
First, we give some additional notations and results from random matrix theory for the next proof. Let $(\tilde{\lambda}_j, \tilde{u}_j)$ be the be the eigenvalue-eigenvector pairs of $\tilde{A}^*$ such that $\tilde{\lambda}_1 \geq \tilde{\lambda}_2 \geq \cdots \geq \tilde{\lambda}_n$. For a pair of random sequences $\{a_n\}$ and $\{b_n\}$, we write $a_n = \tilde{O}_p(b_n)$ if for any $\varepsilon>0$ and $D>0$ there exists $n_0=n_0(\varepsilon, D)$ such that
\[
\bbP\{a_n \geq n^{\varepsilon}b_n\}\leq n^{-D}\qquad\mathrm{for all}\ n\geq n_0.
\]
For any matrix $M$, we have $M=\sum_j\sigma_ju_jv_j^\top$ by singular value decomposition, then define $|M|=\sum_j|\sigma_j|u_jv_j^\top$. We will use $c$ and $C$ to denote positive constants independent of $n$, which may vary from line to line.

Now, we give some results from random matrix theory regarding the distributions of the eigenvalues and eigenvectors.

\begin{Lem}\label{Lemma:1}
Under $H_0: (g_x, B_x) = (g_y, B_y)$, for $\tilde{A}^*$ defined in \eqref{eq:As} we have
\[
n^{2/3}[\lambda_1(\tilde{A}^*)-2]\stackrel{d}{\longrightarrow} TW_1\ \text{and}\ n^{2/3}[-\lambda_n(\tilde{A}^*)-2]\stackrel{d}{\longrightarrow} TW_1.
\]
\end{Lem}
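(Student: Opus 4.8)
The plan is to show that, under $H_0$, the matrix $\tilde{A}^*$ lies exactly in the class of real symmetric (generalized) Wigner matrices for which Tracy--Widom edge universality is known, and then to quote the relevant theorems of \cite{Erdos:2012} and \cite{Lee:2014}; this is the same strategy already foreshadowed around \eqref{eq:eigcon}, now carried out with hypotheses checked.

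First I would record the distributional consequences of $H_0$. When $(g_x,B_x)=(g_y,B_y)$ one has $P^x=P^y=:P$, so $X_{ij}$ and $Y_{ij}$ are independent $\mathrm{Bernoulli}(P_{ij})$ and $X_{ij}+Y_{ij}\sim B(2,P_{ij})$, while the collection $\{X_{ij}+Y_{ij}:1\le i<j\le n\}$ inherits independence from $\cX$ and $\cY$. A short computation then gives $\bbE\{\tilde{A}^*_{ij}\}=0$ for all $(i,j)$ and, for $i\ne j$,
\[
\Var\{\tilde{A}^*_{ij}\}=\frac{\Var\{X_{ij}+Y_{ij}\}}{(n-1)\cdot 2P_{ij}(1-P_{ij})}=\frac{1}{n-1},
\]
with $\tilde{A}^*_{ii}=0$. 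Hence $\tilde{A}^*$ has independent mean-zero upper-triangular entries, a flat variance profile, a vanishing diagonal, and the exact row-sum normalization $\sum_j\Var\{\tilde{A}^*_{ij}\}=1$; this is precisely the normalization under which the limiting spectral law is the semicircle law on $[-2,2]$, so $2$ is the correct edge centering.

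Next I would verify the regularity hypotheses of the universality theorems. By Assumption~\ref{Ass:1} the entries of $P$ are bounded away from $0$ and $1$, and since $|X_{ij}+Y_{ij}-2P_{ij}|\le 2$ holds deterministically, the rescaled entries $\sqrt{n-1}\,\tilde{A}^*_{ij}$ are uniformly bounded, hence uniformly sub-Gaussian with all moments controlled uniformly in $n,i,j$ --- more than enough for the moment/tail conditions required. Invoking the edge universality result for real symmetric generalized Wigner matrices (\cite{Erdos:2012,Lee:2014}) then yields $n^{2/3}[\lambda_1(\tilde{A}^*)-2]\stackrel{d}{\longrightarrow} TW_1$; using $n^{2/3}$ in place of $(n-1)^{2/3}$ contributes only a vanishing $O(n^{-1/3})$ correction. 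For the smallest eigenvalue, I would observe that $-\tilde{A}^*$ is a generalized Wigner matrix of the same type (same variance profile, same bounds, same vanishing diagonal) and that $-\lambda_n(\tilde{A}^*)=\lambda_1(-\tilde{A}^*)$, so the same theorem applied to $-\tilde{A}^*$ gives $n^{2/3}[-\lambda_n(\tilde{A}^*)-2]\stackrel{d}{\longrightarrow} TW_1$.

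There is no substantial obstacle here beyond careful bookkeeping: the one point that must be handled with some care is invoking the cited edge-universality theorem in a form that simultaneously accommodates (i) non-identically distributed entries and (ii) an identically zero diagonal. Both features are standard in the generalized-Wigner literature, but the precise statement used should be pinned down; everything else is a direct verification of hypotheses.
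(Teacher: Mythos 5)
Your proof is correct and follows essentially the same route as the paper: both reduce the claim to the edge universality results of \cite{Erdos:2012} and \cite{Lee:2014} after checking that $\tilde{A}^*$ has independent mean-zero entries with variance $1/(n-1)$ and unit row-variance sums. The only cosmetic difference is that the paper makes the intermediate step explicit by introducing a Gaussian comparison matrix $G$ with matching first and second moments and invoking Theorem 2.4 of \cite{Erdos:2012} as a comparison theorem, whereas you cite the end-product universality statement directly; your added care in verifying boundedness of the entries and in handling $-\lambda_n$ via $-\tilde{A}^*$ is consistent with what the paper does implicitly.
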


\begin{proof}
	Let $G$ be a $n\times n$ symmetric matrix whose upper diagonal entries are independent normal with mean 0 and variance $1/(n-1)$, and diagonal entries are 0. Then, under $H_0$, $\tilde{A}^*$ and $G$ have the same first and second moments. According to Theorem 2.4 of \cite{Erdos:2012}, we have that $n^{2/3}[\lambda_1(\tilde{A}^*)-2]$ and $n^{2/3}[\lambda_1(G)-2]$ have the same limiting distribution. On the other hand, according to \cite{Lee:2014}, $n^{2/3}[-\lambda_n(G)-2]\stackrel{d}{\longrightarrow} TW_1$. Therefore, the results hold.
\end{proof}

\begin{Lem}\label{Lemma:2}
For each deterministic unit vector $u$ and each $1 \leq j \leq n$, for any $\varepsilon>0$ and $D>0$ there exists $n_0 = n_0(\varepsilon, D)$ such that
\[
\bbP\{(u^\top \tilde{u}_j)^2\geq n^{-1+\varepsilon}\}\leq n^{-D}\qquad \text{for all}\ n\geq n_0.
\]
\end{Lem}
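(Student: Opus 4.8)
The plan is to derive Lemma \ref{Lemma:2} from the isotropic local semicircle law for $\tilde{A}^*$. First I would record that $\tilde{A}^*$ is a generalized Wigner matrix with an essentially optimal structure: under $H_0$ we have $X_{ij}+Y_{ij}\sim B(2,P_{ij})$, hence $\Var(X_{ij}+Y_{ij})=2P_{ij}(1-P_{ij})$ and therefore $\Var(\tilde{A}^*_{ij})=1/(n-1)$ exactly for all $i\neq j$, so that $\sum_{j\neq i}\Var(\tilde{A}^*_{ij})=1$; moreover Assumption \ref{Ass:1} gives $P_{ij}(1-P_{ij})\geq c>0$, so $|\tilde{A}^*_{ij}|\leq C/\sqrt{n-1}$ surely. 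Thus $\sqrt{n}\,\tilde{A}^*$ has bounded entries, zero diagonal, independent upper-triangular part, and a flat variance profile, which is exactly the setting in which the entrywise and isotropic local laws underlying the results already quoted from \cite{Erdos:2012} and \cite{Lee:2014} apply. Writing $G(z)=(\tilde{A}^*-z)^{-1}$ for $z=E+\mathrm{i}\eta$ and $m_{sc}$ for the Stieltjes transform of the semicircle law, I will use that for every fixed $\varepsilon'>0$ there is $C'$ such that, uniformly over $|E|\leq C'$, over $n^{-1+\varepsilon'}\leq\eta\leq 1$, and over deterministic unit vectors $u$,
\[
\langle u,G(z)u\rangle=m_{sc}(z)+\tilde{O}_p\!\big(\Psi(z)\big),\qquad \Psi(z):=\sqrt{\frac{\mathrm{Im}\,m_{sc}(z)}{n\eta}}+\frac{1}{n\eta},
\]
together with the norm bound $\|\tilde{A}^*\|\leq 2+\tilde{O}_p(n^{-2/3})\leq C'$.

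Next I would extract the overlap from the imaginary part of the resolvent. Using the spectral decomposition $G(z)=\sum_k(\tilde{\lambda}_k-z)^{-1}\tilde{u}_k\tilde{u}_k^\top$,
\[
\mathrm{Im}\,\langle u,G(E+\mathrm{i}\eta)u\rangle=\sum_k\frac{\eta\,(u^\top\tilde{u}_k)^2}{(\tilde{\lambda}_k-E)^2+\eta^2}\ \geq\ \frac{(u^\top\tilde{u}_j)^2}{\eta}\qquad\text{when }E=\tilde{\lambda}_j,
\]
since every summand is nonnegative. Hence $(u^\top\tilde{u}_j)^2\leq\eta\,\mathrm{Im}\,\langle u,G(\tilde{\lambda}_j+\mathrm{i}\eta)u\rangle$. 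Taking $\eta=n^{-1+\varepsilon'}$ with $0<\varepsilon'<\varepsilon$, the isotropic law gives $\mathrm{Im}\,\langle u,G(\tilde{\lambda}_j+\mathrm{i}\eta)u\rangle\leq\mathrm{Im}\,m_{sc}+\tilde{O}_p(\Psi)=\tilde{O}_p(1)$ (indeed $\mathrm{Im}\,m_{sc}\leq 1$ and $\Psi=o(1)$ on the relevant region), so that $(u^\top\tilde{u}_j)^2=\tilde{O}_p(n^{-1+\varepsilon'})$. The only delicate point here is that the spectral parameter $E=\tilde{\lambda}_j$ is random; I would handle this in the usual way — apply the isotropic law on an $n^{-3}$-net of $[-C',C']$ at fixed $\eta$, use that $E\mapsto G(E+\mathrm{i}\eta)$ is Lipschitz with constant $\eta^{-2}\leq n^{2}$ to pass from the net to all of $[-C',C']$, and use the norm bound to confine all the $\tilde{\lambda}_j$ to $[-C',C']$ with probability $1-O(n^{-D})$. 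A union bound then makes the estimate hold simultaneously at $E=\tilde{\lambda}_1,\dots,\tilde{\lambda}_n$, and choosing $\varepsilon'<\varepsilon$ turns $(u^\top\tilde{u}_j)^2=\tilde{O}_p(n^{-1+\varepsilon'})$ into $\bbP\{(u^\top\tilde{u}_j)^2\geq n^{-1+\varepsilon}\}\leq n^{-D}$ for all $n\geq n_0$, which is the claim.

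The bookkeeping in the first step (exact variances, boundedness) and the net-plus-Lipschitz argument in the second step are routine. The substantive ingredient, and the step I expect to be the real obstacle, is the isotropic local law itself: controlling $\langle u,G(z)u\rangle$ for an \emph{arbitrary} deterministic direction $u$ — not merely for coordinate vectors — down to the optimal scale $\eta\sim n^{-1+\varepsilon'}$, and doing so uniformly up to the spectral edge, where $\mathrm{Im}\,m_{sc}(E+\mathrm{i}\eta)$ degenerates like $\sqrt{\eta}$ (this only helps, but the edge regime must be included since $\tilde{\lambda}_1$ and $\tilde{\lambda}_n$ live there). If one prefers not to invoke the isotropic law as a black box, an alternative route is to first establish the entrywise local law and coordinate-basis delocalization $\max_k\|\tilde{u}_k\|_\infty^2=\tilde{O}_p(n^{-1})$ for $\tilde{A}^*$, and then upgrade to general $u$ via a fluctuation-averaging / polarization argument; either way, this resolvent estimate carries essentially all of the content of the proof, and it is the only place where Assumption \ref{Ass:1} is genuinely used (through the lower bound $P_{ij}(1-P_{ij})\geq c$). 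Assumption \ref{Ass:2} plays no role in this lemma.
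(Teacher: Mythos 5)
Your proposal is correct, and it is worth noting that the paper itself offers no proof of Lemma \ref{Lemma:2} at all: the lemma is simply stated as a known fact from random matrix theory (following the same practice as \cite{Lei:2016}, where the analogous statement is imported from the isotropic delocalization literature). What you have written is precisely the standard derivation that those sources carry out: verify that $\sqrt{n}\,\tilde{A}^*$ is a generalized Wigner matrix with flat variance profile $1/(n-1)$ off the diagonal and entries bounded by $C/\sqrt{n}$ (this is where Assumption \ref{Ass:1} enters, exactly as you say), invoke the isotropic local semicircle law for $\langle u,G(z)u\rangle$ down to scale $\eta=n^{-1+\varepsilon'}$, bound $(u^\top\tilde{u}_j)^2\leq\eta\,\mathrm{Im}\,\langle u,G(\tilde{\lambda}_j+\mathrm{i}\eta)u\rangle=\tilde{O}_p(\eta)$, and handle the randomness of $E=\tilde{\lambda}_j$ by a net, Lipschitz continuity of the resolvent, and the norm bound. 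Your accounting of where the real content lies is also accurate: the isotropic local law is the black box on which the paper implicitly relies, and your two suggested routes to it (citing it directly for arbitrary deterministic $u$, or upgrading an entrywise law via polarization) are both legitimate. The only point I would add is a one-line remark that the vanishing diagonal of $\tilde{A}^*$ does not obstruct the local law (some formulations of the generalized Wigner class require all variances, including diagonal ones, to be of order $1/n$; the zero-diagonal case is covered by the adjacency-matrix versions of these results), and that the union bound over the $n$ choices of $j$ costs only a factor $n$, absorbed by increasing $D$. In short, your proof supplies the argument the paper omits, and it is the right one; what the paper's citation-only treatment buys is brevity, while your version makes explicit exactly which hypotheses on $B_x,B_y$ are used and confirms that Assumption \ref{Ass:2} is not needed here.
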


Note that the above result is uniform over $j$ and $u$ in the case that $n_0$ does depend on $j$ and $u$. Using above notation, the result can be write as $(u^\top\tilde{u}_j)^2=\tilde{O}_p(n^{-1})$ uniformly over all $\tilde{u}_j (1\leq j\leq n)$ and all deterministic unit vector $u$.

\begin{Lem}[\cite{Lei:2016}]\label{Lemma:3}
	Let $c_n$ be a possibly random number of order $o_p(1)$ and $m(c_n)$ be the number of eigenvalues of $\tilde{A}^*$ larger than $\tilde{\lambda}_1 - c_n$. Then $m(c_n) = O_p(nc_n^{3/2})+\tilde{O}_p(1)$.
\end{Lem}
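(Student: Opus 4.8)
The plan is to reduce the statement to an eigenvalue-counting problem near the upper spectral edge of $\tilde{A}^*$ and to control that count by combining the rigidity of the eigenvalues of a generalized Wigner matrix with the square-root vanishing of the limiting spectral density at the edge. Since $\tilde{A}^*$ is, under $H_0$, a generalized Wigner matrix (mean-zero independent upper-triangular entries with $\sum_j \Var\{\tilde{A}^*_{ij}\} = 1$ and bounded entries under Assumption \ref{Ass:1}), the results of \cite{Erdos:2012} apply. Writing $\rho_{\mathrm{sc}}$ for the semicircle density on $[-2,2]$ and defining the classical locations $\gamma_1 \geq \gamma_2 \geq \cdots \geq \gamma_n$ by $\int_{\gamma_j}^{2} \rho_{\mathrm{sc}}(x)\,dx = (j-\tfrac12)/n$, the rigidity estimate provides a bound $|\tilde{\lambda}_j - \gamma_j| = \tilde{O}_p(n^{-2/3}\hat{\jmath}^{-1/3})$ uniformly in $j$, with $\hat{\jmath} = \min(j, n+1-j)$, which near the top edge I will use in the crude uniform form $|\tilde{\lambda}_j - \gamma_j| = \tilde{O}_p(n^{-2/3})$.

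First I would locate $\tilde{\lambda}_1$. Since $\gamma_1 = 2 + O(n^{-2/3})$ and rigidity at $j=1$ gives $\tilde{\lambda}_1 = \gamma_1 + \tilde{O}_p(n^{-2/3})$, we obtain $\tilde{\lambda}_1 = 2 + \tilde{O}_p(n^{-2/3})$. Consequently the threshold satisfies $\tilde{\lambda}_1 - c_n = 2 - c_n + \tilde{O}_p(n^{-2/3})$, and every index $j$ counted by $m(c_n)$ obeys $\tilde{\lambda}_j \geq 2 - c_n - \tilde{O}_p(n^{-2/3})$. Applying rigidity a second time to replace $\tilde{\lambda}_j$ by $\gamma_j$ yields, on the high-probability rigidity event, the deterministic inequality $2 - \gamma_j \leq c_n + C n^{-2/3+\varepsilon}$ for all such $j$, where $\varepsilon > 0$ is arbitrary and the constant $C$ absorbs the rigidity fluctuation and the fluctuation of $\tilde{\lambda}_1$.

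Second, I would translate this into a bound on $j$ via the edge asymptotics of the classical locations. From $\rho_{\mathrm{sc}}(x) \asymp \sqrt{(2-x)_+}$ one gets $\int_{\gamma_j}^{2}\rho_{\mathrm{sc}} = \tfrac{2}{3\pi}(2-\gamma_j)^{3/2}(1+o(1))$, hence $2 - \gamma_j = c_0 (j/n)^{2/3}(1+o(1))$ for a fixed $c_0 > 0$ near the edge. Substituting into $2-\gamma_j \leq c_n + Cn^{-2/3+\varepsilon}$ and solving for $j$ gives
\[
j \leq c_0^{-3/2}\, n\,\bigl(c_n + C n^{-2/3+\varepsilon}\bigr)^{3/2}.
\]
Taking the maximal such $j$ and using $(a+b)^{3/2} \leq \sqrt{2}\,(a^{3/2}+b^{3/2})$ for $a,b\geq 0$ splits the bound as
\[
m(c_n) \leq C' n\, c_n^{3/2} + C'' n \cdot n^{-1+3\varepsilon/2} = C' n\, c_n^{3/2} + C'' n^{3\varepsilon/2}.
\]
Since $\varepsilon > 0$ is arbitrary, the second term is $O(n^{\delta})$ for every $\delta > 0$ on the rigidity event, i.e.\ $\tilde{O}_p(1)$, while the first is a deterministic multiple of the random quantity $n c_n^{3/2}$, i.e.\ $O_p(n c_n^{3/2})$. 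As the rigidity event has probability at least $1 - n^{-D}$ for every $D$, the exceptional set is negligible and the claimed bound $m(c_n) = O_p(n c_n^{3/2}) + \tilde{O}_p(1)$ follows.

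The main obstacle I anticipate is not the counting algebra but justifying the uniform rigidity input for $\tilde{A}^*$: one must check that the centered, rescaled Bernoulli-sum matrix genuinely satisfies the hypotheses of the generalized Wigner rigidity theorem of \cite{Erdos:2012}, namely that Assumption \ref{Ass:1} forces the per-entry variances $2P_{ij}(1-P_{ij})/(n-1)$ to be comparable (bounded above and below by constant multiples of $1/n$), so that the variance profile is of the required doubly-stochastic type, and that the boundedness of the entries supplies the needed moment and tail conditions. A secondary subtlety is that $c_n$ is random, so the comparison must be performed on a single high-probability event on which rigidity holds simultaneously for all $j$; once that event is fixed, the displayed inequalities are deterministic and the dependence on the random $c_n$ is harmless.
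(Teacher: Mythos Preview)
The paper does not supply its own proof of this lemma; it is stated with attribution to \cite{Lei:2016} and used as a black box in the proof of Theorem~\ref{Thm:null}, so there is nothing in-paper to compare your argument against.

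Your proposal is correct and is essentially the standard route (and the one in \cite{Lei:2016}): combine eigenvalue rigidity for generalized Wigner matrices with the $3/2$-power edge scaling of the semicircle law to count eigenvalues in a shrinking window at the top edge. One small correction to your closing paragraph: under $H_0$ the off-diagonal variances of $\tilde{A}^*$ are all \emph{exactly} $1/(n-1)$, not merely comparable to $1/n$, because the factor $2P_{ij}(1-P_{ij})$ cancels between numerator and denominator in the rescaling; hence the doubly-stochastic variance-profile condition of \cite{Erdos:2012} is satisfied trivially, and Assumption~\ref{Ass:1} enters only to guarantee the uniform moment bounds on the (bounded) entries. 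Your treatment of the random $c_n$ --- freezing a single high-probability rigidity event and arguing deterministically on it --- is the right way to handle that dependence.
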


\subsection{Proof of Theorem \ref{Thm:null}}
According to consistent assumptions of the number of communities and community membership labels, we will focus on the event $\{\widehat{g}_x = g_x, \widehat{g}_y = g_y, \widehat{K}_x = K_x, \widehat{K}_y = K_y\}$.

First, we consider the result for $\tilde{A}$. The other result can be proved by applying the same argument on $-\tilde{A}$. For ease of presentation, we write $g = g_x = g_y$ and $B = B_x = B_y$ when $H_0$ holds.

Let $\tilde{A}'\in\bbR^{n\times n}$ be such that
\begin{equation}
	\tilde{A}'=
	\begin{dcases}
		\dfrac{X_{ij}+Y_{ij} - 2\widehat{P}_{ij}}{\sqrt{(n-1)\cdot 2P_{ij}(1-P_{ij})}}, & i\neq j,\\
		\dfrac{2P_{ii} - 2\widehat{P}_{ii}}{\sqrt{(n-1)\cdot 2P_{ii}(1-P_{ii})}}, & i=j.
	\end{dcases}
\end{equation}
Thus, $\tilde{A}'=\tilde{A}^*+\Delta'$, where $\Delta_{ij}'=(2P_{ij} - 2\widehat{P}_{ij})/\sqrt{(n-1)\cdot 2P_{ij}(1-P_{ij})}$. Under $H_0$, we can see that $P_{ij} = P^x_{ij} = P^y_{ij}$. Hence, $\Delta'$ is a $K\times K$ block-wise constant symmetric matrix, its rank is at most $K$, and the corresponding principal subspace is spanned by $(\theta_1, \cdots, \theta_K)$, where $\theta_k\in\bbR^n$ is the unit norm indicator of the $k$th community in $g$. That is, the $i$th entry of $\theta_k$ is $n_k^{-1/2}$ if $g(i) = k$ and zero otherwise, where $n_k$ is the size of the $k$th community.

On the other hand, by Hoeffding's inequality, we have $\max_{u,v}|\widehat{B}_{uv}^x - B_{uv}^x| = \max_{u,v}|\widehat{B}_{uv}^y - B_{uv}^y| = o_p(K\log n/n)$, which implies that $\max_{i,j}|\widehat{P}_{ij}^x - P_{ij}^x| = \max_{i,j}|\widehat{P}_{ij}^y - P_{ij}^y| = o_p(K\log n/n)$. Since $\dfrac{\widehat{P}_{ij}^x + \widehat{P}_{ij}^y}{2} \geq \sqrt{\widehat{P}_{ij}^x \cdot \widehat{P}_{ij}^y}$, we have
\begin{align*}
	& \max_{ij}|2\widehat{P}_{ij}-2P_{ij}| \leq 2\max_{i,j}|\widehat{P}_{ij} - P_{ij}| \\ &\leq 2\max_{i,j}|\sqrt{\widehat{P}_{ij}^x \cdot \widehat{P}_{ij}^y} - P_{ij}| \leq 2\max_{i,j}|\dfrac{\widehat{P}_{ij}^x + \widehat{P}_{ij}^y}{2} - P_{ij}|,
\end{align*}
which implies $\max_{ij}|2\widehat{P}_{ij}-2P_{ij}| = o_p(K\log n/n)$. Let $\Delta' = \Theta\Gamma\Theta^\top$, where $\Theta = (\theta_1, \cdots, \theta_k)$ and $\Gamma$ is a $K\times K$ symmetric matrix. Then each entry of $\Gamma$ is $o_p(n^{-1/2}\log n)$, and hence $\|\Gamma\|=o_p(Kn^{-1/2}\log n)$.

We will show that
\begin{equation}\label{eq:proof1}
\lambda_1(\tilde{A}') = \lambda_1(\tilde{A}^*) + o_p(n^{-2/3}),
\end{equation}
by establishing a lower and upper bound on $\lambda_1(\tilde{A}')$. Both parts use the eigenvector delocalization result (Lemma \ref{Lemma:2}) as follows. Let $\Theta = (\theta_1, \cdots, \theta_k)$, then, uniformly over $j$ we have
\[
	\|\Theta^\top\tilde{u}_j\|^2_2=\sum_{k=1}^K(\theta_k^\top\tilde{u}_j)^2 = \tilde{O}_p(Kn^{-1}),
\]
and hence
\begin{align*}
	&|\tilde{u}_j^\top\Delta'\tilde{u}_j|\leq|(\Theta^\top\tilde{u}_j)^\top\Gamma(\Theta^\top\tilde{u}_j)|\leq\|\Theta^\top\tilde{u}_j\|_2^2\|\Gamma\|\\ & =\tilde{O}_p(K^2n^{-3/2}\log n),
\end{align*}

where the last equality holds when taking union bound over $K$ terms by choosing $D$ large enough in Lemma \ref{Lemma:2}.

Now, we prove a lower bound on $\lambda_{1}(\tilde{A}')$:
\begin{align*}
	\lambda_1(\tilde{A}') & \geq\ \tilde{u}_1^\top\tilde{A}'\tilde{u}_1\\
	& = \tilde{\lambda}_1 + \tilde{u}_1^\top\Delta'\tilde{u}_1 \\ 
	& \geq \tilde{\lambda}_1 - \tilde{O}_p(K^2n^{-3/2}\log n), \\
	& \geq \tilde{\lambda}_1 - o_p(n^{-2/3}),
\end{align*}
where the last inequality uses the assumed upper bound on the rate at which $K$ grows with $n$.

Similarly, we also prove an upper bound on $\lambda_1(\tilde{A}')$. For any unit vector $u\in\bbR^{n}$, let $(a_1,\cdots,a_n)$ be a unit vector in $\bbR^{n}$ such that $u = \sum_{j=1}^n a_j\tilde{u}_j$.

Let $m$ be the number of $\tilde{\lambda}_j$’s in the interval $(\tilde{\lambda}_1 - 2\|\Delta'\|, \tilde{\lambda}_1]$, and $u_1 = \sum_{j=1}^m a_j\tilde{u}_j, u_2 = \sum_{j=m+1}^n a_j\tilde{u}_j$. Then
\begin{align*}
	u^\top\tilde{A}'u & = u^\top\tilde{A}^*u + u^\top\Delta'u \notag \\
	& \leq \tilde{\lambda}_1\sum_{j=1}^ma_j^2 + (\tilde{\lambda}_1 - 2\|\Delta'\|)\sum_{j=m+1}^na_j^2 + \\
	& \qquad 2u_1^\top|\Delta'|u_1 + u_2^\top|\Delta'|u_2 \notag \\
	& \leq \tilde{\lambda}_1\sum_{j=1}^ma_j^2 + (\tilde{\lambda}_1 - 2\|\Delta'\|)\sum_{j=m+1}^na_j^2 + \\
	& \qquad 2m \sum_{j=1}^ma_j^2\tilde{u}_j^\top|\Delta'|\tilde{u}_j + u_2^\top|\Delta'|u_2 \notag \\
	& \leq \tilde{\lambda}_1\sum_{j=1}^ma_j^2 + (\tilde{\lambda}_1 - 2\|\Delta'\|)\sum_{j=m+1}^na_j^2 + \\
	& \qquad 2m\tilde{O}_p(K^2n^{-2/3}\log n)\sum_{j=1}^ma_j^2 + 2\|\Delta'\|\sum_{j=m+1}^na_j^2 \notag \\
	& \leq \tilde{\lambda}_1 + m\tilde{O}_p(K^2n^{-2/3}\log n) \notag \\
	& \leq \tilde{\lambda}_1 + (O(n\|\Delta'\|^{3/2})+\tilde{O}_p(1))\tilde{O}_p(K^2n^{-2/3}\log n) \notag \\
	& = \tilde{\lambda}_1 + \tilde{O}_p(K^{7/2}n^{-5/4}(\log n)^{5/2}),
\end{align*}
where the second last line uses Lemma \ref{Lemma:3} together with $\|\Delta'\| = o_p(Kn^{-1/2}\log n)$. Thus, \eqref{eq:proof1} holds when $K=O(n^{1/6-\tau})$ for some small positive $\tau$.

Next, we show that $\lambda_1(\tilde{A}) = \lambda_1(\tilde{A}') + o_p(n^{-2/3})$. Let $\tilde{A}'' = \tilde{A}' - \diag(\tilde{A}')$. Consider the partitioned matrix of $\tilde{A}$:
\[
\tilde{A} = (\tilde{A}_{(k,l)})_{k,l=1}^K,
\]
where $\tilde{A}_{(k,l)}$ is the submatrix corresponding to the rows in community $k$ and columns in community $l$. Similar notations can be defined for $\tilde{A}''$.

Note that 
\[
\tilde{A}_{(k,l)} = \tilde{A}''_{(k,l)}\sqrt{\dfrac{B_{kl}(1-B_{kl})}{\widehat{B}_{kl}(1-\widehat{B}_{kl})}} = \tilde{A}''(1+o_p(K\log n/n)).
\]
Therefore,
\begin{align*}
	\|\tilde{A} - \tilde{A}''\| & = K\max_{k,l}\|\tilde{A}_{(k,l)} - \tilde{A}''_{(k,l)}\| \\
	& \leq o_p(K\log n/n)K\sum_{k,l}\|\tilde{A}''_{(k,l)}\| \\
	& \leq o_p(K^2\log n/n)\|\tilde{A}''\| \\
	& \leq o_p(K^2\log n/n)[\|\tilde{A}'\|+\|\diag(\tilde{A}')\|] \\
	& \leq o_p(K^2\log n/n)[O_p(1)+O_p(Kn^{-3/2}\log n)] \\
	& = o_p(K^2\log n/n) = o_p(n^{-2/3}).
\end{align*}

Then 
\[
\|\tilde{A} - \tilde{A}'\|\leq\|\tilde{A} - \tilde{A}''\| + \|\diag(\tilde{A}'')\| = o_p(n^{-2/3}),
\]
combining with \eqref{eq:proof1}, we have
\[
\lambda_1(\tilde{A}') = \lambda_1(\tilde{A}^*) + o_p(n^{-2/3}).
\]
By applying Lemma \ref{Lemma:1}, we have
\[
n^{-2/3}[\lambda_1(\tilde{A})-2]\stackrel{d}{\longrightarrow} TW_1.
\]
\hfill$\square$

\subsection{Proof of Theorem \ref{Thm:power}}
For $1\leq l_x \leq K_x, 1\leq k_x \leq K_x$, denote $\cN_{l_x}^x = \{i:g_x(i)=l_x\}, \widehat{\cN}_{k_x}^x = \{i:\widehat{g}_x(i)=k_x\}$, and $\widehat{\cN}_{k_x,l_x}=\{i:\widehat{g}_x(i)=k_x,g_x(i)=l_x\}$. Note, for each $1\leq l_x \leq K_x$, $\cN_{l_x}^x=\sum_{k_x=1}^{K_x}\widehat{\cN}_{k_x,l_x}$, where symbol $\sum$ represents the union of pairwise disjoint sets. Thus, for each $1\leq l_x \leq K_x$ there exist a $1\leq k_x(l_x) \leq K_x$, which depend on $l_x$, such that $|\widehat{\cN}_{k_x(l_x), l_x}| \geq |\widehat{\cN}_{l_x}|/K_x \geq c_xn/K_x^2$. Similarly, the same results hold when the script $x$ is replaced by $y$, and we omit the details. Since, $K_x = K_y$, there exist $l_1, l_2, l_3$ and $l_4$ such that $k_x(l_1) = k_x(l_2) = k_y(l_3) = k_y(l_4) = k$. By Assumption \ref{Ass:1}, there exist $l_5$ and $l_6$ such that $B_{l_1, l_5}^x\neq B_{l_2, l_5}^x$ and $B_{l_3, l_6}^y\neq B_{l_4, l_6}^y$, and we have $|\widehat{\cN}_{k_x(l_5),l_5}^x| \geq c_x n/K_x^2$ and $|\widehat{\cN}_{k_y(l_6),l_6}^y| \geq c_y n/K_y^2$.

Let $\mathcal{A}$ be the submatrix of $\tilde{A}$ consisting the rows in $\widehat{\cN}_{k,l_1}^x\cup\widehat{\cN}_{k,l_2}^x\cup\widehat{\cN}_{k,l_3}^y\cup\widehat{\cN}_{k,l_4}^y$, and the columns in $\widehat{\cN}_{k_x(l_5),l_5}^x\cup \widehat{\cN}_{k_y(l_6),l_6}^y$. For simplicity, denote $Z = \cX + \cY$. Hence, $\mathcal{Z}, \mathcal{P}$ and $\widehat{\mathcal{P}}$ be defined in the same way as $\mathcal{A}$. 

Note, when $k \neq k_x(l_5) \neq k_y(l_6)$, or $k = k_x(l_5) = k_y(l_6)$ but $l_5\notin\{l_1, l_2\}, l_6\notin\{l_3, l_4\}$, the submatrix $\mathcal{A}$ contains only off-diagonal entries of $A$. Therefore, $\widehat{\mathcal{P}}$ is a constant matrix. It holds that
\begin{align}\label{eq:proof2}
	\|\tilde{A}\| & \geq \|\mathcal{A}\| \geq [2(n-1)]^{-1/2}\|\mathcal{A} - 2\widehat{\mathcal{P}}\| \notag \\
	& \geq (2n)^{-1/2} (\|2\mathcal{P} - \widehat{2\mathcal{P}}\| - \|\mathcal{A} - 2\mathcal{P}\|) \notag \\
	& \geq (2n)^{-1/2}[2\|\mathcal{P} - \widehat{\mathcal{P}}\| - O_p(\sqrt{2n})] \notag \\
	& \geq (2n)^{-1/2}[(\delta_x\wedge\delta_y)(c_x\wedge c_y)nK^{-2}/2 - O_p(\sqrt{2n})].
\end{align}

To obtain the last inequality, first, note that $\mathcal{P}_x$ has two distinct blocks each with at least $c_xn/K^2$ rows and at least $c_xn/K^2$ columns, $\mathcal{P}_y$ also has two distinct blocks each with at least $c_yn/K^2$ rows and at least $c_yn/K^2$ columns. Thus, $\mathcal{P} = \mathcal{P}_x * \mathcal{P}_y$ has two distinct blocks each with at least $(c_x\wedge c_y)n/K^2$ rows and at least $(c_x\wedge c_y)n/K^2$ columns, where $*$ indicate Hadamard product of two matrices. Each of these two blocks has constant entries and at least one of them has an absolute entry value of at least $(\delta_x\wedge\delta_y)/2$. Thus, $2\|\mathcal{P} - \widehat{\mathcal{P}}\| \geq (\delta_x\wedge\delta_y)(c_x\wedge c_y)nK^{-2}/2$.

Finally, when $k = k_x(l_5) = k_y(l_6)$, and $l_5\in\{l_1, l_2\}, l_6\in\{l_3, l_4\}$, the submatrix $\mathcal{A}$ contains diagonal entries of $\mathcal{A}$. The corresponding entries of $\widehat{\mathcal{P}}$ are 0. These zero entries lead an additional $O(1)$ term in $2\|\mathcal{P} - \widehat{\mathcal{P}}\|$ and \eqref{eq:proof2} still holds. \hfill $\square$

\bibliographystyle{agu04}
\bibliography{ref}

\begin{thebibliography}{36}
\providecommand{\natexlab}[1]{#1}
\expandafter\ifx\csname urlstyle\endcsname\relax
  \providecommand{\doi}[1]{doi:\discretionary{}{}{}#1}\else
  \providecommand{\doi}{doi:\discretionary{}{}{}\begingroup
  \urlstyle{rm}\Url}\fi

\bibitem[{\textit{Airoldi et~al.}(2008)\textit{Airoldi, Blei, Fienberg, and
  Xing}}]{Airoldi:2008}
Airoldi, E.~M., D.~M. Blei, S.~E. Fienberg, and E.~P. Xing (2008), Mixed
  membership stochastic blockmodels, \textit{Journal of machine learning
  research}, \textit{9}, 1981 -- 2014.

\bibitem[{\textit{Amini et~al.}(2013)\textit{Amini, Chen, Bickel, and
  Levina}}]{Amini:2013}
Amini, A.~A., A.~Chen, P.~J. Bickel, and E.~Levina (2013), Pseudo-likelihood
  methods for community detection in large sparse networks,, \textit{The Annals
  of Statistics}, \textit{41}(4), 2097 -- 2122, \doi{10.1214/13-AOS1138}.

\bibitem[{\textit{Barnett and Onnela}(2016)}]{Barnett:2016}
Barnett, I., and J.-P. Onnela (2016), Change point detection in correlation
  networks, \textit{Scientic reports}, \textit{6}, 18,893,
  \doi{10.1038/srep18893}.

\bibitem[{\textit{Bickel and Chen}(2009)}]{Bickel:2009}
Bickel, P.~J., and A.~Chen (2009), A nonparametric view of network models and
  newman-girvan and other modularities, \textit{Proceedings of the National
  Academy of Sciences}, \textit{106}(50), 21,068 -- 21,073,
  \doi{10.1073/pnas.0907096106}.

\bibitem[{\textit{Bickel and Sarkar}(2015)}]{Bickel:2015}
Bickel, P.~J., and P.~Sarkar (2015), Hypothesis testing for automated community
  detection in networks, \textit{Journal of the Royal Statistical Society.
  Series B (Statistical Methodology)}, \textit{78}(1), 253 -- 273,
  \doi{10.1111/rssb.12117}.

\bibitem[{\textit{Chen et~al.}(2021{\natexlab{a}})\textit{Chen, Josephs, Lin,
  Zhou, and Kolaczyk}}]{Chen:2021s}
Chen, L., N.~Josephs, L.~Lin, J.~Zhou, and E.~D. Kolaczyk (2021{\natexlab{a}}),
  A spectral-based framework for hypothesis testing in populations of networks,
  \textit{Statistica Sinica}, \textit{online}, \doi{10.5705/ss.202021.0306}.

\bibitem[{\textit{Chen et~al.}(2021{\natexlab{b}})\textit{Chen, Zhou, and
  Lin}}]{Chen:2021c}
Chen, L., J.~Zhou, and L.~Lin (2021{\natexlab{b}}), Hypothesis testing for
  populations of networks, \textit{Communications in Statistics - Theory and
  Methods}, \textit{online}, \doi{10.1080/03610926.2021.1977961}.

\bibitem[{\textit{Choi et~al.}(2012)\textit{Choi, Wolfe, and
  Airoldi}}]{Choi:2012}
Choi, D.~S., P.~J. Wolfe, and E.~M. Airoldi (2012), Stochastic blockmodels with
  a growing number of classes, \textit{Biometrika}, \textit{99}(2), 273--284,
  \doi{10.1093/biomet/asr053}.

\bibitem[{\textit{Daudin et~al.}(2008)\textit{Daudin, Picard, and
  Robin}}]{Daudin:2008}
Daudin, J.-J., F.~Picard, and S.~Robin (2008), A mixture model for random
  graphs, \textit{Statistics and Computing}, \textit{18}, 173 -- 183,
  \doi{10.1007/s11222-007-9046-7}.

\bibitem[{\textit{Erd{\H{o}}s et~al.}(2012)\textit{Erd{\H{o}}s, Knowles, Yau,
  and Yin}}]{Erdos:2012}
Erd{\H{o}}s, L., A.~Knowles, H.-T. Yau, and J.~Yin (2012), Spectral statistics
  of {E}rd{\H{o}}s-{R}\'enyi graphs {II}: Eigenvalue spacing and the extreme
  eigenvalues, \textit{Communications in Mathematical Physics volume},
  \textit{314}, 587 -- 640, \doi{10.1007/s00220-012-1527-7}.

\bibitem[{\textit{Ghoshdastidar and von Luxburg}(2018)}]{Ghoshdastidar:2018}
Ghoshdastidar, D., and U.~von Luxburg (2018), Practical methods for graph
  two-sample testing, in \textit{Advances in Neural Information Processing
  Systems}, vol.~31, edited by S.~Bengio, H.~Wallach, H.~Larochelle,
  K.~Grauman, N.~Cesa-Bianchi, and R.~Garnett, pp. 3019--3028, Curran
  Associates, Inc.

\bibitem[{\textit{Ghoshdastidar et~al.}(2020)\textit{Ghoshdastidar, Gutzeit,
  Carpentier, and von Luxburg}}]{Ghoshdastidar:2020}
Ghoshdastidar, D., M.~Gutzeit, A.~Carpentier, and U.~von Luxburg (2020),
  Two-sample hypothesis testing for inhomogeneous random graphs, \textit{The
  Annals of Statistics}, \textit{48}(4), 2208--2229, \doi{10.1214/19-AOS1884}.

\bibitem[{\textit{Greene and Cunningham}(2013)}]{Greene:2013}
Greene, D., and P.~Cunningham (2013), Producing a unified graph representation
  from multiple social network views, in \textit{Proceedings of the 5th Annual
  ACM Web Science Conference}, vol. WebSci '13, p. 118–121, Association for
  Computing Machinery, New York, USA, \doi{10.1145/2464464.2464471}.

\bibitem[{\textit{Holland et~al.}(1983)\textit{Holland, Laskey, and
  Leinhardt}}]{Holland:1983}
Holland, P.~W., K.~B. Laskey, and S.~Leinhardt (1983), Stochastic blockmodels:
  First steps, \textit{Social Networks}, \textit{5}(2), 109 -- 137,
  \doi{10.1016/0378-8733(83)90021-7}.

\bibitem[{\textit{Hu et~al.}(2020)\textit{Hu, Qin, Yan, , and Zhao}}]{Hu:2020}
Hu, J., H.~Qin, T.~Yan, , and Y.~Zhao (2020), Corrected bayesian information
  criterion for stochastic block models, \textit{Journal of the American
  Statistical Association}, \textit{115}(532), 1771 -- 1783,
  \doi{10.1080/01621459.2019.1637744}.

\bibitem[{\textit{Hu et~al.}(2021)\textit{Hu, Zhang, Qin, Yan, and
  Zhu}}]{Hu:2021}
Hu, J., J.~Zhang, H.~Qin, T.~Yan, and J.~Zhu (2021), Using maximum entry-wise
  deviation to test the goodness of fit for stochastic block models,
  \textit{Journal of The American Statistical Association}, \textit{116}(535),
  1373 -- 1382, \doi{10.1080/01621459.2020.1722676}.

\bibitem[{\textit{Jin}(2015)}]{Jin:2015}
Jin, J. (2015), Fast community detection by score, \textit{The Annals of
  Statistics}, \textit{43}(1), 57 -- 89, \doi{10.1214/14-AOS1265}.

\bibitem[{\textit{Karrer and Newman}(2011)}]{Karrer:2011}
Karrer, B., and M.~E.~J. Newman (2011), Stochastic blockmodels and community
  structure in networks, \textit{Physical Review. E}, \textit{83}(1), 016,107,
  \doi{10.1103/PhysRevE.83.016107}.

\bibitem[{\textit{Lee and Yin}(2014)}]{Lee:2014}
Lee, J.~O., and J.~Yin (2014), A necessary and sufficient condition for edge
  universality of wigner matrices, \textit{Duke Mathematical Journal},
  \textit{163}(1), 117 -- 173, \doi{10.1215/00127094-2414767}.

\bibitem[{\textit{Lei}(2016)}]{Lei:2016}
Lei, J. (2016), A goodness-of-fit test for stochastic block models, \textit{The
  Annals of Statistics}, \textit{44}(1), 401 -- 424, \doi{10.1214/15-AOS1370}.

\bibitem[{\textit{Lei and Rinaldo}(2015)}]{Lei:2015}
Lei, J., and A.~Rinaldo (2015), Consistency of spectral clustering in
  stochastic block models., \textit{The Annals of Statistics}, \textit{43}(1),
  215 -- 237, \doi{10.1214/14-AOS1274}.

\bibitem[{\textit{Newman}(2006)}]{Newman:2006}
Newman, M. E.~J. (2006), Modularity and community structure in networks,
  \textit{Proceedings of the National Academy of Sciences of the United States
  of America}, \textit{103}(23), 8577 -- 8582, \doi{10.1073/pnas.0601602103}.

\bibitem[{\textit{Newman and Girvan}(2004)}]{Newman:2004}
Newman, M. E.~J., and M.~Girvan (2004), Finding and evaluating community
  structure in networks, \textit{Physical review. E}, \textit{69}(3), 026,113,
  \doi{10.1103/PhysRevE.69.026113}.

\bibitem[{\textit{Nowicki and Snijders}(2001)}]{Nowicki:2001}
Nowicki, K., and T.~Snijders (2001), Estimation and prediction for stochastic
  block structures, \textit{Journal of The American Statistical Association},
  \textit{96}(11), 1077 -- 1087, \doi{10.1198/016214501753208735}.

\bibitem[{\textit{Rohe et~al.}(2011)\textit{Rohe, Chatterjee, and
  Yu}}]{Rohe:2011}
Rohe, K., S.~Chatterjee, and B.~Yu (2011), Spectral clustering and the
  high-dimensional stochastic blockmodel, \textit{The Annals of Statistics},
  \textit{39}(4), 1878 -- 1915, \doi{10.1214/11-AOS887}.

\bibitem[{\textit{Sald{\~n}a et~al.}(2017)\textit{Sald{\~n}a, Yu, and
  Feng}}]{Saldna:2017}
Sald{\~n}a, D.~F., Y.~Yu, and Y.~Feng (2017), How many communities are there?,
  \textit{Journal of Computational and Graphical Statistics}, \textit{26}(1),
  171 -- 181, \doi{10.1080/10618600.2015.1096790}.

\bibitem[{\textit{Sarkar and Bickel}(2015)}]{Sarkar:2015}
Sarkar, P., and P.~J. Bickel (2015), Role of normalization in spectral
  clustering for stochastic blockmodels, \textit{The Annals of Statistics},
  \textit{43}(3), 962 -- 990, \doi{10.1214/14-AOS1285}.

\bibitem[{\textit{Snijders and Nowicki}(1997)}]{Snijders:1997}
Snijders, T., and K.~Nowicki (1997), Estimation and prediction for stochastic
  block-structures for graphs with latent block structure, \textit{Journal of
  Classification}, \textit{14}, 75 -- 100, \doi{10.1007/s003579900004}.

\bibitem[{\textit{Steinhaeuser and Chawla}(2010)}]{Steinhaeuser:2010}
Steinhaeuser, K., and N.~V. Chawla (2010), Identifying and evaluating community
  structure in complex networks, \textit{Pattern Recognition Letters},
  \textit{31}(5), 413 -- 421, \doi{10.1016/j.patrec.2009.11.001}.

\bibitem[{\textit{Tang et~al.}(2017)\textit{Tang, Athreya, Sussman, Lyzinski,
  and Priebe}}]{Tang:2017}
Tang, M., A.~Athreya, D.~L. Sussman, V.~Lyzinski, and C.~E. Priebe (2017), A
  nonparametric two-sample hypothesis testing problem for random graphs,
  \textit{Bernoulli}, \textit{23}(3), 1599 -- 1630, \doi{10.3150/15-BEJ789}.

\bibitem[{\textit{Wang et~al.}(2021)\textit{Wang, Zhang, Liu, Zhu, and
  Guo}}]{Wang:2021}
Wang, J., J.~Zhang, B.~Liu, J.~Zhu, and J.~Guo (2021), Fast network community
  detection with profile-pseudo likelihood methods, \textit{Journal of the
  American Statistical Association}, \textit{online},
  \doi{10.1080/01621459.2021.1996378}.

\bibitem[{\textit{Wang and Bickel}(2017)}]{Wang:2017}
Wang, Y.~R., and P.~J. Bickel (2017), Likelihood-based model selection for
  stochastic block models, \textit{The Annals of Statistics}, \textit{45}(2),
  500 -- 528, \doi{10.1214/16-AOS1457}.

\bibitem[{\textit{Wu et~al.}(2022)\textit{Wu, Kong, and Xu}}]{Wu:2021}
Wu, F., X.~Kong, and C.~Xu (2022), Test on stochastic block model: Local
  smoothing and extreme value theory, \textit{Journal of Systems Science and
  Complexity}, \textit{35}(5), 1535–1556, \doi{10.1007/s11424-021-0154-9}.

\bibitem[{\textit{Zhang and Zhou}(2016)}]{Zhang:2016}
Zhang, A.~Y., and H.~H. Zhou (2016), Minimax rates of community detection in
  stochastic block models, \textit{The Annals of Statistics}, \textit{44}(5),
  2252 -- 2280, \doi{10.1214/15-AOS1428}.

\bibitem[{\textit{Zhao et~al.}(2011)\textit{Zhao, Levina, and Zhu}}]{Zhao:2011}
Zhao, Y., E.~Levina, and J.~Zhu (2011), Community extraction for social
  networks, \textit{Proceedings of the National Academy of Sciences of the
  United States of America}, \textit{108}(18), 7321 -- 7326,
  \doi{10.1073/pnas.1006642108}.

\bibitem[{\textit{Zhao et~al.}(2012)\textit{Zhao, Levina, and Zhu}}]{Zhao:2012}
Zhao, Y., E.~Levina, and J.~Zhu (2012), Consistency of community detection in
  networks under degree-corrected stochastic block models, \textit{The Annals
  of Statistics}, \textit{40}(4), 2266--2292, \doi{10.1214/12-AOS1036}.

\end{thebibliography}

\end{document}